\newcommand{\R}{\mathbb{R}}
\newcommand{\N}{\mathbb{N}}
\newtheorem{theorem}{Theorem}
\newtheorem{lemma}{Lemma}[section]
\newtheorem{definition}{Definition}
\begin{document}

\author{K\'aroly J. B\"or\"oczky\footnote{Alfr\'ed R\'enyi Institute of Mathematics, Hungarian Academy
  of Sciences, Re\'altanoda u. 13-15., H-1053 Budapest, Hungary, boroczky.karoly.j@renyi.hu, Research is supported in parts by NKFIH grant 132002.},
Alexey Glazyrin \footnote{The University of Texas Rio Grande Valley, School of Mathematical \& Statistical Sciences, One West University Blvd, Brownsville, Texas, USA, Alexey.Glazyrin@utrgv.edu, Research is partially supported by NSF grant DMS-2054536}}

\date{}

\title{Stability of optimal spherical codes}

\maketitle

\begin{abstract}
For many extremal configurations of points on a sphere, the linear programming approach can be used to show their optimality. In this paper we establish the general framework for showing stability of such configurations and use this framework to prove the stability of the two spherical codes formed by minimal vectors of the lattice $E_8$ and of the Leech lattice.
\end{abstract}

\section{Definitions and main results}

By a spherical $d$-dimensional code we mean a finite set of points from the unit sphere $\mathbb{S}^{d-1}$. A $d$-dimensional spherical code with $N$ points is called a $(d,N,s)$-code if all pairwise dot products of distinct points from the code are not greater than $s$. When we say that a $(d,N,s)$-code is optimal, we mean that there doesn't exist a $(d,N',s)$-code with $N'>N$.

In this paper, we consider optimal spherical codes whose optimality can be shown via the linear programming bound. In particular, we want to concentrate on two classical codes, the $(8,240,1/2)$-code and the $(24,196560,1/2)$-code. The optimality of these codes was shown independently by Odlyzko and Sloane \cite{odl79} and by Levenshtein \cite{lev79}. Bannai and Sloane \cite{BaS81} proved that
both the $(8,240,1/2)$-code and the $(24,196560,1/2)$-code are unique up to orthogonal transformations in their respective spaces.  Actually, assuming that the non-zero vectors of minimal length of the corresponding lattices are of unit length, the $(8,240,1/2)$-code consists of $E_8\cap S^7$, and the $(24,196560,1/2)$-code consists of
$\Lambda_{24}\cap S^{23}$, where $\Lambda_{24}$ is the Leech lattice. These codes solve the kissing number problem in $\R^8$ and $\R^{24}$ (see Conway, Sloane \cite{CoS98} and Erikson, Zinoviev \cite{ErZ01}); namely, the maximum number of non-overlapping unit balls touching a given unit ball is $240$ in $\R^8$ and
196560 in $\R^{24}$. Some recent related results include
 Balla,  Draxler, Keevash, Sudakov \cite{BDKS18} about spherical codes, and 
Keevash, Long \cite{KeL20} about stability of codes.

We will call two symmetric matrices $P, Q$ of the same size $\delta$-close if $||P-Q||_{max}\leq\delta$.

\begin{definition}
Two spherical $d$-dimensional codes $A=\{a_1,\ldots,a_k\}$ and $B=\{b_1,\ldots,b_k\}$ are called $\delta$-close if there is a permutation $\sigma$ on $B$ such that the Gram matrices of $A$ and $\sigma(B)$ are $\delta$-close.
\end{definition}

As in the papers and books referenced above, our approach is based on the linear programming bound. For the linear programming bound for sphere packings on $\mathbb{S}^{d-1}$ we define Gegenbauer polynomials $Q_i$, $i\in\N$, in one variable where each $Q_i$ is of degree $i$, and satisfies the following recursion:
\begin{eqnarray*}
Q_0(t)&=&1\\
Q_1(t)&=&t\\
Q_{i+1}(t)&=&\frac{(2i+d-2)tQ_i(t)-iQ_{i-1}(t)}{i+d-2} \mbox{ \ for $i\geq 2$}.
\end{eqnarray*}
We do not signal the dependence of $Q_i$ on $d$ because the original notation for the Gegenbaur polynomial is
$Q_i=Q_i^{(\alpha)}$ for $\alpha=\frac{d-2}2$ as
$$
\int_{-1}^1Q_i(t)Q_j(t)(1-t^2)^{\frac{d-3}2}\,dt=0\mbox{ \ \ if $i\neq j$}.
$$

Polynomials are normalized so that $Q_i(1)=1$ for all $i$. The main property of these polynomials is that for any spherical code $\{a_1,\ldots,a_k\}\subset \mathbb{S}^{d-1}$ and any non-negative $i$, the $k\times k$ matrix $Q_i(\langle a_m,a_n\rangle)$ is positive semi-definite (see Schoenberg \cite{sch42} or the book of Erikson and Zinoviev \cite{ErZ01}).

We use the following version of the linear programming bound.

\begin{theorem}
\label{linprogbound}
Let $d\geq 2$. If $f=f_0Q_0+f_1Q_1+\ldots+f_kQ_k$ for $k\geq 1$ and
 $f_1,\ldots,f_k\geq 0$, then, for a spherical code $X$ with $N$ points,
\begin{equation}
\label{linprogbound0}
Nf(1)+\sum_{x,y\in X\atop x\neq y}f(\langle x,y\rangle)\geq N^2 f_0.
\end{equation}
\end{theorem}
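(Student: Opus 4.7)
The plan is to rewrite the left-hand side of (\ref{linprogbound0}) as a single double sum over all ordered pairs (including diagonal terms), expand $f$ in the Gegenbauer basis, and then invoke Schoenberg's positive semi-definiteness on each summand.

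First I would observe that, since $\langle x,x\rangle=1$ for every $x\in X$ and $f$ is a polynomial evaluated at inner products, the quantity $Nf(1)+\sum_{x\neq y}f(\langle x,y\rangle)$ is exactly
\begin{equation*}
S(f):=\sum_{x,y\in X}f(\langle x,y\rangle).
\end{equation*}
The map $f\mapsto S(f)$ is linear, so substituting $f=\sum_{i=0}^{k}f_iQ_i$ yields
\begin{equation*}
S(f)=\sum_{i=0}^{k}f_i\,S(Q_i)=\sum_{i=0}^{k}f_i\sum_{x,y\in X}Q_i(\langle x,y\rangle).
\end{equation*}

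Next I would evaluate the $i=0$ contribution and bound the remaining ones. Because $Q_0\equiv 1$, we get $S(Q_0)=N^2$, accounting precisely for the right-hand side $N^2 f_0$. For each $i\geq 1$, let $M_i$ denote the $N\times N$ matrix with entries $Q_i(\langle x,y\rangle)$, $x,y\in X$. By Schoenberg's theorem (cited after the definition of the $Q_i$), $M_i$ is positive semi-definite, so
\begin{equation*}
S(Q_i)=\mathbf{1}^{T}M_i\mathbf{1}\geq 0,
\end{equation*}
where $\mathbf{1}\in\R^{N}$ is the all-ones vector. Combined with the hypothesis $f_i\geq 0$ for $i\geq 1$, each of these terms is non-negative.

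Putting the two observations together gives
\begin{equation*}
S(f)=N^2 f_0+\sum_{i=1}^{k}f_i\,\mathbf{1}^{T}M_i\mathbf{1}\geq N^2 f_0,
\end{equation*}
which is precisely (\ref{linprogbound0}). There is no real obstacle here beyond correctly bookkeeping the diagonal $x=y$ contribution; the substantive content is entirely carried by Schoenberg's positive semi-definiteness result, which is invoked as a black box.
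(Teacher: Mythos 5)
Your proof is correct and follows essentially the same route as the paper: the paper bundles the non-constant part of $f$ into a single positive semi-definite matrix $(f-f_0Q_0)(\langle x,y\rangle)$, whereas you apply Schoenberg's result term-by-term to each $Q_i$, but the underlying argument and bookkeeping are identical.
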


\begin{proof}
The $N\times N$ matrix formed by $(f-f_0Q_0) (\langle x,y\rangle)$ for all $x,y\in X$ must be positive semi-definite. Hence its sum of elements is non-negative. Since

$$\sum_{x,y\in X}f(\langle x,y\rangle) = N f(1) + \sum_{x,y\in X\atop x\neq y}f(\langle x,y\rangle)$$

and

$$\sum_{x,y\in X}f_0Q_0(\langle x,y\rangle)=N^2f_0,$$

the statement of the theorem follows immediately.
\end{proof}

The classical linear programming bound (sometimes called the Delsarte bound) for $(d,N,s)$-codes is a simple corollary of Theorem \ref{linprogbound}: if we additionally require $f_0>0$ and $f(t)\leq 0$ for all $t\in[-1,s]$, then for any $(d,N,s)$-code
\begin{equation}
\label{linprogbound1}
N\leq f(1)/ f_0,
\end{equation}

because all $f(\langle x,y\rangle)$ will be non-positive for $x\neq y$.

\begin{definition}
\label{Delsarte-tight}
A spherical $(d,N,s)$-code is called Delsarte-tight if there exists a polynomial $f=f_0Q_0+f_1Q_1+\ldots+f_kQ_k$ for $k\geq 1$, $f_0>0$ and $f_1,\ldots,f_k\geq 0$, such that $f(t)\leq 0$ for all $t\in[-1,s]$ and $N=f(1)/ f_0$.
\end{definition}

\begin{table}
\caption{Table of the known sharp configurations, together with the $600$-cell (from \cite{coh07}).}\label{tab:sharp}
\begin{center}
\begin{tabular}{ccccc}
$n$ & $N$ & $M$ & Inner products & Name\\
\hline $2$ & $N$ & $N-1$ & $\cos (2\pi j/N)$ ($1 \le j \le
N/2$) & $N$-gon\\
$n$ & $N \le n$ & $1$ & $-1/(N-1)$ & simplex\\
$n$ & $n+1$ & $2$ & $-1/n$ & simplex\\
$n$ & $2n$ & $3$ & $-1, 0$ & cross polytope\\
$3$ & $12$ & $5$ & $-1, \pm 1/\sqrt{5}$ & icosahedron\\
$4$ & $120$ & $11$ & $-1,\pm 1/2, 0,(\pm 1 \pm \sqrt{5})/4$ & $600$-cell\\
$8$ & $240$ & $7$ & $-1,\pm 1/2, 0$ & $E_8$ roots\\
$7$ & $56$ & $5$ & $-1, \pm 1/3$ & kissing\\
$6$ & $27$ & $4$ & $-1/2, 1/4$ & kissing/Schl\"afli\\
$5$ & $16$ & $3$ & $-3/5, 1/5$ & kissing\\
$24$ & $196560$ & $11$ & $-1,\pm 1/2, \pm 1/4, 0$ & Leech lattice\\
$23$ & $4600$ & $7$ & $-1,\pm 1/3, 0$ & kissing\\
$22$ & $891$ & $5$ & $-1/2, -1/8, 1/4$ & kissing\\
$23$ & $552$ & $5$ & $-1, \pm 1/5$ & equiangular lines\\
$22$ & $275$ & $4$ & $-1/4, 1/6$ & kissing\\
$21$ & $162$ & $3$ & $-2/7, 1/7$ & kissing\\
$22$ & $100$ & $3$ & $-4/11, 1/11$ & Higman-Sims\\
$q\frac{q^3+1}{q+1}$ & $(q+1)(q^3+1)$ & $3$ & $-1/q, 1/q^2$ &
{\hskip -8pt}
isotropic subspaces\\
& & {\hskip -22pt}(4 if $q=2$){\hskip -22pt}
& & {\hskip -8pt} ($q$ a prime power)\\
\\
\end{tabular}
\end{center}
\end{table}

In the case all coefficients $f_1,\ldots,f_k$ are strictly positive, a Delsarte-tight set is known in literature as a sharp set (see \cite{ban09} for more details). This is not always the case that all these coefficients are strictly positive. For instance, the vertices of a 600-cell in $\mathbb{S}^3$ form a Delsarte-tight but not sharp set. Table \ref{tab:sharp} from \cite{coh07} lists all known sharp configurations with their inner products and types.

The main goal of this paper is to find out what codes satisfy the relaxed condition on the maximal inner product if the strict condition defines a Delsarte-tight $(d,N,s)$-code. We consider $(d,N,s+\varepsilon)$-codes and show how close they must be to Delsarte-tight $(d,N,s)$-codes.

\begin{theorem}[Weak stability of Delsarte-tight codes]\label{thm:general}
If there exists a Delsarte-tight $(d,N,s)$-code, then, for sufficiently small positive $\varepsilon$, any $d$-dimensional spherical code with all pairwise dot products less than $s+\varepsilon$ has no more than $N$ points and for any $(d,N,s+\varepsilon)$-code $S$ there is a constant $C$ and an isometry $A$ of $\mathbb{S}^{d-1}$ such that pairwise spherical distances between the points of $A(S)$ and the points of some Delsarte-tight $(d,N,s)$-code $T$ are not greater than $C\varepsilon^{1/m}$, where $m$ is the largest root multiplicity of the polynomial $f$ corresponding to a Delsarte-tight code $T$ as described in Definition \ref{Delsarte-tight}.
\end{theorem}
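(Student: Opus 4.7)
The plan is to exploit the slack in the linear programming bound (Theorem~\ref{linprogbound}) when applied to a code whose inner products stray slightly past $s$. The Delsarte-tight polynomial $f$ satisfies $f(t)\le 0$ on $[-1,s]$ and saturates the LP bound with $N$ points; for $t\in(s,s+\varepsilon]$ a Taylor expansion at $s$ gives $f(t)\le C_1\varepsilon$ with $C_1$ depending only on $f$. Consequently, substituting a $(d,N',s+\varepsilon)$-code into Theorem~\ref{linprogbound} and bounding each off-diagonal contribution by $C_1\varepsilon$ yields $N'f(1)+N'(N'-1)C_1\varepsilon\ge (N')^2 f_0$, so $N'\le N+O(\varepsilon)$; integrality forces $N'\le N$ once $\varepsilon$ is small.

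Next I would analyze a $(d,N,s+\varepsilon)$-code $S$ itself. Using $Nf_0=f(1)$, the LP inequality becomes $0\le \sum_{x\neq y}f(\langle x,y\rangle)\le N(N-1)C_1\varepsilon$. The positive part of this sum (from pairs with inner product in $(s,s+\varepsilon]$) is already $O(\varepsilon)$ by the pointwise bound on $f$, and the lower bound then forces the absolute value of the negative part to have the same order, so $\sum_{x\neq y}|f(\langle x,y\rangle)|=O(\varepsilon)$. The polynomial $f$ vanishes on $[-1,s]$ at the inner products $t_1,\dots,t_r$ of a Delsarte-tight code $T$ with multiplicities $m_1,\dots,m_r$; locally $|f(t)|\ge c_j|t-t_j|^{m_j}$, and globally $|f(t)|\ge c\,d(t)^m$ where $d(t)=\min_j|t-t_j|$ and $m=\max_j m_j$. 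Summing gives $\sum_{x\neq y}d(\langle x,y\rangle)^m=O(\varepsilon)$; since every term is non-negative, $(\max_{x\neq y}d(\langle x,y\rangle))^m\le \sum d^m$, which yields the required individual bound $d(\langle x,y\rangle)\le C_2\varepsilon^{1/m}$ for every pair.

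The remaining task is to convert this entrywise closeness of the Gram matrix of $S$ to the target values $\{1,t_1,\dots,t_r\}$ into closeness to one specific Delsarte-tight code $T$ after a suitable isometry $A$ and matching $\sigma$. Qualitatively, compactness suffices: any subsequential limit of $(d,N,s+\varepsilon_n)$-codes with $\varepsilon_n\to 0$ has inner products restricted to $\{t_1,\dots,t_r\}$ and saturates the LP bound, hence is Delsarte-tight. The main obstacle is the quantitative step---identifying the permutation $\sigma$ and isometry $A$ so that the resulting Gram-matrix discrepancy is genuinely $O(\varepsilon^{1/m})$ rather than a weaker power. This demands a local rigidity analysis of the tight Gram matrix (a Cholesky- or eigenvalue-type perturbation argument), whose sharpness is ultimately underwritten, in the $E_8$ and Leech cases, by the Bannai--Sloane uniqueness theorem, which rules out a cluster of distinct nearby Delsarte-tight codes that would otherwise spoil the matching.
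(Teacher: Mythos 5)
Your outline matches the paper's two-stage strategy: first a Gram-matrix closeness estimate, then a perturbation argument to turn it into closeness of configurations. The first two steps of your write-up (the $N'\le N$ bound and the pointwise estimate $d(\langle x,y\rangle)=O(\varepsilon^{1/m})$, where $d(t)$ is the distance to the root set of $f$ restricted to $[-1,s]$) are essentially identical to the paper's argument in Theorem~\ref{thm:weak-stability}, up to cosmetic differences (you sum $|f|$; the paper bounds each term individually).

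However, the third step is where you stop at a sketch and where the genuine content of the theorem lies, and there are two concrete gaps. First, knowing that the Gram matrix of $S$ is entrywise close to a symmetric matrix $B$ whose off-diagonal entries lie in the root set $R$ of $f$ does \emph{not} by itself give you a Delsarte-tight code: you must verify that $B$ is positive semi-definite of rank at most $d$, i.e.\ that it is actually the Gram matrix of a $d$-dimensional spherical code. The paper handles this via Lemma~\ref{lem:close} (determinant signs over the finite family $M_R$ are stable under small perturbations) combined with Lemma~\ref{lem:d-code} (a clean criterion for when such a matrix is a $d$-dimensional Gram matrix). Without this, your "compactness suffices" remark establishes existence of a limiting code but not that the specific rounded matrix $B$ (which is the object you need to be close to) is itself realizable. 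Second, the quantitative passage from Gram-matrix closeness to configuration closeness is Theorem~\ref{thm:code-stability}, an eigendecomposition-stability estimate proved in the paper for positive semi-definite matrices of equal rank; you correctly flag that "a Cholesky- or eigenvalue-type perturbation argument" is needed, but you do not supply it, and the constant-tracking there is what produces the explicit bound $C\varepsilon^{1/m}$. Finally, your appeal to the Bannai--Sloane uniqueness theorem to "rule out a cluster of distinct nearby Delsarte-tight codes" is both unnecessary and out of scope here: the paper's argument produces one canonical nearby matrix $B$ (each inner product snaps to its unique nearest root), verifies it is a Gram matrix of \emph{some} Delsarte-tight $(d,N,s)$-code $T$, and matches $S$ to that $T$; the statement only asks for \emph{some} $T$, and uniqueness of the underlying code plays no role in the general theorem (it only enters in Sections~\ref{sect:8} and~\ref{sect:24} for the sharper linear-in-$\varepsilon$ estimates).
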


In all known examples of Delsarte-tight spherical codes (see Table \ref{tab:sharp}), the largest root multiplicity $m$ of their corresponding polynomials is 1 or 2. Theorem \ref{thm:general} is applicable to all codes from Table \ref{tab:sharp}. Essentially for these codes, the theorem means 1/2-H\"older contunuity of optimal codes depending on the maximal inner product.

For several optimal spherical codes (simplex, cross polytope, icosahedron, 600-cell), their optimality can be show using the so-called simplex bound. The strong stability of the simplex bound for these codes was shown in \cite{BBGK}.

As the main results of this paper we show the strong stability of the linear programming bound for the $(8,240,1/2)$-code and the $(24,196560,1/2)$-code.

\begin{theorem}[Strong stability of the $(8,240,1/2)$-code]\label{thm:8}
For sufficiently small positive $\varepsilon$, any $8$-dimensional spherical code with all pairwise dot products less than $1/2+\varepsilon$ has no more than $240$ points and for any $(8,240,1/2+\varepsilon)$-code $S$ there is a constant $C_8$ and an isometry $A$ of $\mathbb{S}^{7}$ such that pairwise spherical distances between the points of $A(S)$ and the points of the $(8,240,1/2)$-code are not greater than $C_8\varepsilon$.
\end{theorem}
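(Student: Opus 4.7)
The plan is to bootstrap the weak stability from Theorem \ref{thm:general}, which yields only an $\varepsilon^{1/2}$ bound since the largest root multiplicity of the $E_8$ LP polynomial is $m=2$, up to the sharp $\varepsilon$ bound by combining the same inequality with the rigidity of the $8$-dimensional embedding.

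First, I would apply Theorem \ref{thm:general} to obtain an isometry $A$ of $\mathbb{S}^7$ and a bijection between $A(S)=\{s_1,\dots,s_{240}\}$ and the $E_8$ code $T=\{t_1,\dots,t_{240}\}$ satisfying $|\delta_{ij}|\le C\varepsilon^{1/2}$, where $\delta_{ij}:=\langle s_i,s_j\rangle-\langle t_i,t_j\rangle$. Each $\langle s_i,s_j\rangle$ then lies within $C\varepsilon^{1/2}$ of one of the four $E_8$ inner products $-1,-\tfrac12,0,\tfrac12$; in particular the set of bad pairs $B=\{(i,j):\langle s_i,s_j\rangle>\tfrac12\}$ consists only of pairs $(i,j)$ with $\langle t_i,t_j\rangle=\tfrac12$, so $|B|$ is bounded by a constant independent of $\varepsilon$.

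Next, plug $S$ into the LP inequality (\ref{linprogbound0}) with the Delsarte-tight polynomial $f$ for $T$, whose roots include simple roots at the endpoints $-1,\tfrac12$ and double roots at the interior values $-\tfrac12,0$. Since $f(1)/f_0=240$, this yields $\sum_{i\neq j}f(\langle s_i,s_j\rangle)\ge 0$. Taylor-expanding at each root, the good-pair contributions are non-positive (of order $c\,\delta_{ij}$ near the simple roots and $c\,\delta_{ij}^2$ near the double roots for appropriate positive constants $c$), while each bad pair contributes at most $f'(\tfrac12)\varepsilon+O(\varepsilon^2)$. Rearranging, and using $|B|=O(1)$, gives the key bound
\begin{equation*}
\sum_{\text{pairs near }-\tfrac12,\,0}\delta_{ij}^2\;+\;\sum_{\text{good pairs near }-1,\,\tfrac12}|\delta_{ij}|\;\le\;C\varepsilon.
\end{equation*}
This already delivers $|\delta_{ij}|\le C\varepsilon$ individually for each of the $6840$ pairs near the simple roots (those with $\langle t_i,t_j\rangle\in\{-1,\tfrac12\}$), and the same follows for bad pairs since $\delta_{ij}\le\varepsilon$ by assumption.

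The remaining step, which I expect to be the main obstacle, is to upgrade the sum-of-squares estimate $\sum\delta_{ij}^2=O(\varepsilon)$ to the individual linear bound $|\delta_{ij}|\le C\varepsilon$ for the $21840$ pairs near the double roots $-\tfrac12,\,0$. The proposed approach is to exploit the rigidity of the $8$-dimensional embedding: because $S\subset\R^8$, the Gram matrix $G_S$ has rank $\le 8$, so the perturbation $\Delta=G_S-G_T$ lies to first order in the $(8\cdot 240-\binom{8}{2})=1892$-dimensional tangent space to the rank-$8$ Gram variety at $G_T$, far smaller than the $28680$-dimensional ambient space of symmetric matrices with unit diagonal. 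The crux is to verify that the projection of this tangent space onto the $6840$ simple-root coordinates is injective, i.e.\ that no non-trivial first-order deformation of $T$ can keep all simple-root inner products fixed; this should follow from the strong symmetry of $E_8$ and its rigid $1+56+126+56+1$ neighbor distribution about each vertex. Once the projection is shown to be invertible with bounded inverse, each $\delta_{ij}$ for a double-root pair becomes a bounded linear function of the simple-root $\delta_{kl}$'s modulo a quadratic error, and since the latter are already $O(\varepsilon)$ so are the former, yielding $|\delta_{ij}|\le C_8\varepsilon$ uniformly as required.
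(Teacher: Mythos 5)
Your proposal tracks the paper through the weak-stability step and the LP rearrangement, and you identify the right bottleneck: upgrading $\sum \delta_{ij}^2 = O(\varepsilon)$ over pairs near the double roots $-1/2$ and $0$ to a uniform $|\delta_{ij}| = O(\varepsilon)$. However, the way you propose to close this gap is where the argument breaks down. You reduce everything to the claim that the projection of the tangent space to the rank-$8$ Gram variety at $G_T$ onto the $6840$ simple-root coordinates is injective with bounded inverse, and you offer only the heuristic that this ``should follow from the strong symmetry of $E_8$.'' That is an infinitesimal-rigidity statement that would itself be a substantive lemma; nothing in the proposal establishes it, nor the quantitative bound on the inverse that you would need, nor control over the second-order deviation of $G_S - G_T$ from the tangent space. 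As written, the proof has a genuine hole at precisely the step you flag as ``the crux.'' (A minor aside: the tangent space you should be using is to the rank-$\le 8$ \emph{unit-diagonal} variety, which is $1892 - 240 = 1652$-dimensional, not $1892$; this does not affect the plausibility of injectivity but shows the count needs care.)

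The paper closes this gap by explicit combinatorics rather than by an abstract rigidity principle. For pairs $(x,y)$ near $-1/2$, it uses that in $E_8$ there is a witness $z$ with $\langle x,z\rangle = \langle y,z\rangle = 1/2$ (because $x+y$ is again a minimal vector), and the spherical triangle inequality then pins $\langle x,y\rangle$ from below; the complementary upper bound comes from $\sum_{x,y} Q_2(\langle x,y\rangle) \ge 0$. For pairs near $0$, it uses that in $E_8$ such pairs organize into antipodal pairs of $7$-dimensional cross-polytopes sitting on spheres of radius $1/\sqrt{2}$; a $4\times 4$ Gram-determinant nonnegativity gives $\alpha+\beta \ge -O(\varepsilon)$ for two such near-zero products, and averaging over cross-polytopes together with $\sum_{x,y} Q_1(\langle x,y\rangle) \ge 0$ forces each near-zero product to $O(\varepsilon)$. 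This is more ad hoc than the rigidity picture you sketch, but it is actually carried out. If you wanted to pursue your route, the injectivity of the tangent-space projection (and a workable bound on its inverse) is the lemma you would have to prove, and it is not at all obvious how to do so short of computations comparable in effort to the paper's combinatorial estimates.
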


\begin{theorem}[Strong stability of the $(24,196560,1/2)$-code]\label{thm:24}
For sufficiently small positive $\varepsilon$, any $24$-dimensional spherical code with all pairwise dot products less than $1/2+\varepsilon$ has no more than $196560$ points and for any $(24,196560,1/2+\varepsilon)$-code $S$ there is a constant $C_{24}$ and an isometry $A$ of $\mathbb{S}^{23}$ such that pairwise spherical distances between the points of $A(S)$ and the points of the $(24,196560,1/2)$-code are not greater than $C_{24}\varepsilon$.
\end{theorem}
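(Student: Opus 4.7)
The plan is to follow the template established for the $E_8$ code in Theorem \ref{thm:8}, adapted to the Leech data. First, Theorem \ref{thm:general} applied to the Delsarte-tight $(24,196560,\tfrac{1}{2})$-code yields an isometry $A$ of $\mathbb{S}^{23}$ and a bijection $\pi\colon A(S)\to T:=\Lambda_{24}\cap\mathbb{S}^{23}$ with $\|x-\pi(x)\|\leq C_0\sqrt{\varepsilon}$ for every $x\in A(S)$. The exponent $1/2$ reflects that the Levenshtein--Odlyzko--Sloane polynomial $f$ for the Leech code has simple roots at the extremes $\{-1,\tfrac{1}{2}\}$ and double roots at the interior values $\{-\tfrac{1}{2},-\tfrac{1}{4},0,\tfrac{1}{4}\}$, so the largest root multiplicity is $m=2$. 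In particular, for every distinct pair $(x,y)\in A(S)^2$ the inner product $\langle x,y\rangle$ lies within $O(\sqrt{\varepsilon})$ of a unique Leech value $r$, and we set $\delta_{x,y}=\langle x,y\rangle-r$.

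Second, I would substitute into the LP inequality \eqref{linprogbound0} with this $f$; since $N=196560=f(1)/f_0$ the right-hand side vanishes, giving $\sum_{x\neq y}f(\langle x,y\rangle)\geq 0$. Taylor expansion yields $f(r+\delta)\leq -c_r\delta^2$ near each double root, $f(-1+\delta)\leq -c'_{-1}\delta$ for $\delta\geq 0$, and $f(\tfrac{1}{2}+\delta)\approx c'_{1/2}\delta$ (positive for $\delta>0$, negative for $\delta<0$). The positive terms arise solely from excess pairs with $\langle x,y\rangle\in(\tfrac{1}{2},\tfrac{1}{2}+\varepsilon]$, which by weak stability correspond under $\pi$ to Leech neighbors at inner product $\tfrac{1}{2}$; their count is at most $196560\cdot 4600/2$, so the total positive contribution is at most $C_1\varepsilon$. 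Rearranging gives
\[
\sum_{(x,y)\text{ near a double root}} c_r\,\delta_{x,y}^2 \;+\; \sum_{\text{simple-root pairs in the negative direction}} c'_r\,|\delta_{x,y}|\;\leq\; C_1\varepsilon.
\]
This is an $\ell^2$-type bound and on its own only recovers the weak $\ell^\infty$ estimate $|\delta_{x,y}|=O(\sqrt{\varepsilon})$.

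The upgrade to the $\ell^\infty$ bound $|\delta_{x,y}|=O(\varepsilon)$ uses the rigidity of the Leech configuration. For each $x'\in T$, the $4600$ Leech neighbors $\{y'_i\}$ at inner product $\tfrac{1}{2}$ span $\R^{24}$, and the linear system $\{\langle z,y'_i\rangle=\tfrac{1}{2}\}_i$ pins $x'$ down via a well-conditioned pseudoinverse. Translating to $S$, the corresponding neighbors $y_i$ of $x=\pi^{-1}(x')$ satisfy $y_i=y'_i+O(\sqrt{\varepsilon})$ and the sharp kissing constraints $\langle x,y_i\rangle\leq\tfrac{1}{2}+\varepsilon$. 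A naive pseudoinverse estimate would give only $\|x-x'\|=O(\sqrt{\varepsilon})$; the point is that the $\ell^2$-bound from the previous step, applied simultaneously to $x$ and to each of its neighbors $y_i$ (each of which is itself a point of $S$ with its own LP identities), forces a first-order cancellation in the pseudoinverse computation, leaving only $\|x-x'\|=O(\varepsilon)$. Running this over all $x\in A(S)$ yields the conclusion.

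The main obstacle is establishing the first-order cancellation in the rigidity step rigorously: naive propagation of the $O(\sqrt{\varepsilon})$-perturbations of the $4600$ neighbors yields only $O(\sqrt{\varepsilon})$ for $\|x-x'\|$, and extracting the extra factor $\sqrt{\varepsilon}$ requires either a second application of the LP inequality with an auxiliary polynomial exhibiting additional vanishing at $t=\tfrac{1}{2}$ (so that excess pairs contribute only $O(\varepsilon^2)$), or a dedicated rigidity lemma for the Leech kissing configuration. The larger neighborhood size ($4600$ versus $56$ for $E_8$) makes the combinatorial bookkeeping more intricate than in Theorem \ref{thm:8}, but the well-understood algebraic structure of $\Lambda_{24}$ and the fact that its minimal vectors realize a sharp configuration in their own right should allow the analogous argument to go through with only computational changes.
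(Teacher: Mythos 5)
Your outline correctly reproduces the weak-stability step (Theorem~\ref{thm:general} with $m=2$ yields $O(\sqrt{\varepsilon})$-closeness of inner products) and correctly observes that the single LP inequality $\sum_{x\neq y}f(\langle x,y\rangle)\geq 0$ only gives an $\ell^2$-type bound on the deviations, which is not enough. But that is precisely where your argument stops: the ``first-order cancellation in the pseudoinverse computation'' is asserted, not proved, and you acknowledge this yourself. A naive linear rigidity argument using the $4600$ Leech neighbors cannot close the gap, because the neighbor positions are themselves only known to $O(\sqrt{\varepsilon})$ accuracy at that stage, and there is no mechanism in your sketch that upgrades those perturbations to $O(\varepsilon)$ before feeding them into the pseudoinverse.

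The paper's proof takes a genuinely different route and does not use the $\ell^2$-estimate at all. Because $Nf(1)=N^2f_0$ and each matrix $\bigl(Q_i(\langle x,y\rangle)\bigr)$ is PSD, one obtains for \emph{each} Gegenbauer index $i=1,\ldots,10$ separately the two-sided bound $0\leq \sum_{x,y}Q_i(\langle x,y\rangle)\leq\frac{1}{f_i}(N^2-N)\cdot 393120\,\varepsilon$ (inequality~\eqref{qi-bound-24}). Linearizing each $Q_i$ around the Leech values and writing $S_\alpha=\sum_{(x,y)\in A_\alpha}(\langle x,y\rangle-\alpha)$ turns these into a system of linear inequalities in the aggregate signed deviations $S_\alpha$; the coefficient matrix is nonsingular, so $|S_\alpha|=O(\varepsilon)$ for every $\alpha$. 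These $\ell^1$-type bounds are then converted to pointwise bounds by geometric structure: (i) for $\alpha=0$, the common $\tfrac12$-neighbors of a pair form a $46$-point set forced onto a $23$-dimensional cross-polytope, and a $4\times 4$ Gram determinant inequality yields $\alpha+\beta\geq -O(\varepsilon)$ for antipodal pairs, which combined with the $S_0$ bound pins each $0$-pair to $O(\varepsilon)$; (ii) for $\alpha=-1$, Lemma~\ref{lem:almost_perp} applied to $23$ nearly orthogonal neighbors gives $\angle(x_0,y_0)\geq\pi-O(\varepsilon)$; (iii) for $\alpha=\tfrac14$, the $275$ common $\tfrac12$-neighbors project to a tight $(22,275,\tfrac16)$-code whose sharpness forces the pointwise bound; (iv) $-\tfrac12$ and $-\tfrac14$ follow by reflecting through a $-1$-pair. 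Your suggestion of ``a dedicated rigidity lemma for the Leech kissing configuration'' gestures toward items (i)--(iv), but the key missing idea is the per-$Q_i$ inequality~\eqref{qi-bound-24}: it is that extra family of linear constraints, not an auxiliary polynomial or an $\ell^2$ estimate, that supplies the crucial $O(\varepsilon)$ control on the sums $S_\alpha$.
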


We also find concrete Lipschitz constants $C_8$ and $C_{24}$.

The paper is organized as follows. In Section \ref{sect:Gram} we prove the weak stability of Gram matrices of Delsarte-tight codes. Section \ref{sect:eigendecomp} is devoted to the stability of eigendecompositions of positive semi-definite matrices. These two sections combined give the proof of Theorem \ref{thm:general}. In Sections \ref{sect:8} and \ref{sect:24} we analyze the $(8,240,1/2)$- and $(24,196560,1/2)$-codes and obtain the strong stability of these codes subsequently proving Theorems \ref{thm:8} and \ref{thm:24}, respectively.

%
%
%
%
%

\section{Weak stability of Gram matrices for Delsarte-tight codes}\label{sect:Gram}

\begin{lemma}\label{lem:d-code}
A symmetric real matrix with 1's on the diagonal is a Gram matrix of a $d$-dimensional spherical code if and only if all principal minors of this matrix of size greater than $d$ are 0 and all principal minors of this matrix of size not greater than $d$ are non-negative.
\end{lemma}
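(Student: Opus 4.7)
The plan is to reduce the statement to the well-known characterization that a symmetric real matrix $M$ of size $n$ is the Gram matrix of $n$ vectors in $\mathbb{R}^d$ if and only if $M$ is positive semi-definite (PSD) with $\mathrm{rank}(M)\le d$, and that the additional constraint of $1$'s on the diagonal forces those vectors to have unit length, i.e.\ to lie in $\mathbb{S}^{d-1}$. With this in hand, the lemma becomes the equivalent claim: a symmetric real matrix with $1$'s on the diagonal is PSD of rank at most $d$ iff every principal minor of size $>d$ vanishes and every principal minor of size $\le d$ is non-negative.

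For the forward direction I would start from $M=B^{\top}B$ with $B\in\mathbb{R}^{d\times n}$ having unit-norm columns. Any principal submatrix $M_I$ ($I\subseteq\{1,\dots,n\}$) is itself the Gram matrix of the corresponding columns of $B$, so it is PSD and has rank at most $d$. Hence $\det M_I\ge 0$ always, and $\det M_I=0$ whenever $|I|>d$ because an $|I|\times |I|$ matrix of rank $<|I|$ is singular. This settles the easy direction.

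For the converse I need two standard linear-algebra facts. First, a real symmetric matrix is PSD if and only if all of its principal minors are non-negative (the PSD analogue of Sylvester's criterion); under the hypothesis of the lemma, all principal minors are non-negative (zero counts as non-negative), so $M$ is PSD. Second, for a PSD matrix $M$ the rank equals the maximal size of a non-vanishing principal minor: writing $M=V^{\top}V$ with $V\in\mathbb{R}^{r\times n}$ of rank $r$, one selects $r$ linearly independent columns $V_{i_1},\dots,V_{i_r}$ of $V$, and then the corresponding $r\times r$ principal submatrix of $M$ has determinant $(\det[V_{i_1}\,\cdots\,V_{i_r}])^{2}>0$. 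Combining these, the vanishing of all principal minors of size $>d$ forces $\mathrm{rank}(M)\le d$, so $M=B^{\top}B$ for some $B\in\mathbb{R}^{d\times n}$; the diagonal condition makes the columns of $B$ unit vectors and therefore a spherical code in $\mathbb{S}^{d-1}$.

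The main technical obstacle is the second fact above, which is less elementary than the first; I would either cite a standard linear algebra reference or include a short self-contained argument via the factorization $M=V^{\top}V$ as sketched, since the rest of the proof is essentially bookkeeping on top of these two facts.
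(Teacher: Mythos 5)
Your proof is correct and takes essentially the same route as the paper: Sylvester's criterion to get positive semidefiniteness from non-negativity of all principal minors, then bounding the rank by the maximal size of a non-vanishing principal minor. The only differences are that the paper leaves the easy forward direction implicit and simply cites Wedderburn's book for the rank/minor fact about symmetric matrices, whereas you spell out both directions and supply a short self-contained proof of the rank fact via the factorization $M=V^{\top}V$ — a clean alternative in the PSD setting that avoids the external reference.
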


\begin{proof}
Since all principal minors are non-negative, by Sylvester's criterion, the matrix is positive semi-definite and thus is a Gram matrix of a spherical code. If a symmetric matrix has rank $r$, then at least one principal minor of size $r$ is not 0 \cite[Chapter VI, Theorem 4]{wed34}. Any principal minor of size greater than $d$ is 0. Therefore, the rank of the matrix is not greater than $d$ and the spherical code is $d$-dimensional.
\end{proof}

We consider a fixed set $X=\{x_1,\ldots,x_l\}$ of real numbers and a set $M_X$ of all square matrices of size not greater than $N$ with all their entries from $X$.

\begin{lemma}\label{lem:close}
There exists $\varepsilon=\varepsilon(X,N)$ such that for any $\delta<\varepsilon$ and any square matrix $A$ of size not greater than $N$ which is $\delta$-close to a matrix $B$ from $M_X$, the following conditions hold: 1) if $\det A = 0$, then $\det B = 0$; 2) if $\det A \geq 0$, then $\det B\geq 0$.
\end{lemma}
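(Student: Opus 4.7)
The plan is to exploit the fact that $M_X$ is finite. Since $X$ has finitely many elements and we only consider matrices of size at most $N$, the set $M_X$ is a finite collection, and hence so is the set of determinants $\{\det B : B \in M_X\}$. The crucial consequence is that there is a strictly positive gap between $0$ and the nearest nonzero value; that is, there exists $c = c(X,N) > 0$ such that every $B \in M_X$ satisfies either $\det B = 0$ or $|\det B| \geq c$.

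Next I would pair this gap with a continuity estimate for the determinant. If $R = \max_i |x_i|$ and $A$ is within max-norm distance $1$ of some $B \in M_X$, then all entries of $A$ lie in the compact box $[-R-1, R+1]$. On this box the polynomial $\det$ has bounded partial derivatives (each partial is a signed minor, uniformly bounded by $(N-1)!(R+1)^{N-1}$), so the mean value inequality yields a Lipschitz constant $L = L(X,N)$ with
\[
|\det A - \det B| \leq L \, \|A - B\|_{\max}
\]
for all such pairs $(A, B)$.

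I would then define $\varepsilon := \min\{1, c/(2L)\}$, depending only on $X$ and $N$. Fix $\delta < \varepsilon$ and suppose $A$ is $\delta$-close to $B \in M_X$. The above bound gives $|\det A - \det B| \leq L\delta < c$. For statement (1), if $\det A = 0$ then $|\det B| < c$, and the gap property forces $\det B = 0$. For statement (2), if $\det A \geq 0$ then $\det B \geq \det A - L\delta > -c$, so $\det B$ cannot satisfy $\det B \leq -c$; by the gap property again it must be either $0$ or positive, giving $\det B \geq 0$.

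The main (in fact only) subtlety I expect is making sure the Lipschitz constant $L$ genuinely depends only on $X$ and $N$, not on the particular pair $(A,B)$. This is why I would fix a uniform ambient box (entries in $[-R-1, R+1]$) before differentiating, and why the choice $\varepsilon \leq 1$ is made upfront. Everything else follows from the pigeonhole-like observation that a finite set of real numbers cannot accumulate at $0$ unless $0$ is among them.
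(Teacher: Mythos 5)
Your proof is correct and takes essentially the same approach as the paper: use finiteness of $M_X$ to get a positive gap between $0$ and the nearest nonzero determinant, then invoke Lipschitz continuity of the determinant to choose $\varepsilon$ small enough that a $\delta$-close matrix cannot change the sign or cross the gap. You spell out the Lipschitz constant and the two case analyses a bit more carefully than the paper does, but the argument is the same.
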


\begin{proof}
There are finitely many matrices in $M_X$ so there is the minimal non-zero value $K$ among all $|\det B|$, $B\in M_X$. Due to the Lipschitz continuity of determinants, there exists also a constant $C$ such that $|\det A - \det B|\leq C\delta$ for any matrix $B$, $B\in M_X$, and any $A$ $\delta$-close to $B$. It is clear that, if we choose $\varepsilon=K/C$, both conditions 1) and 2) must hold because the determinants of $A$ and $B$ will differ by less than $K$.
\end{proof}

\begin{theorem}\label{thm:weak-stability}
If there exists a Delsarte-tight $(d,N,s)$-code, then, for sufficiently small positive $\varepsilon$, any $d$-dimensional spherical code with all pairwise dot products less than $s+\varepsilon$ has no more than $N$ points and for any $(d,N,s+\varepsilon)$-code $S$ there is a constant $K$ such that this code is $K\varepsilon^{1/m}$-close to some $(d,N,s)$-code, where $m$ is the largest root multiplicity of the polynomial $f$ described in Definition \ref{Delsarte-tight}.
\end{theorem}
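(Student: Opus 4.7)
The plan is to apply Theorem \ref{linprogbound} to the code $S$ using the polynomial $f$ coming from the Delsarte-tight hypothesis, and then to convert the resulting scalar inequality into entrywise estimates on the Gram matrix of $S$, matching those estimates to an exact $(d,N,s)$-code through Lemmas \ref{lem:d-code} and \ref{lem:close}.

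The first step is to describe the behaviour of $f$ near its roots. Let $t_1<\cdots<t_k$ denote the roots of $f$ inside $[-1,s]$ with multiplicities $m_1,\dots,m_k$, and set $m=\max_i m_i$. Factoring $-f(t)=h(t)\prod_i(t-t_i)^{m_i}$ with $|h|$ bounded away from $0$ on $[-1,s]$ will yield the key uniform lower bound $-f(t)\ge c_1\,\mathrm{dist}(t,\{t_1,\dots,t_k\})^m$ on $[-1,s]$, while a Taylor expansion at $s$ gives $|f(t)|\le C(t-s)^{m_s}$ on $[s,s+\varepsilon]$, where $m_s\ge 1$ is the multiplicity of $s$ as a root of $f$. (If instead $f(s)<0$, then $f<0$ on $[s,s+\varepsilon]$ for $\varepsilon$ small, every term in the LP sum is non-positive, and the conclusion follows with much stronger closeness.)

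Next I would establish the cardinality bound. For any spherical code $S$ with $|S|=N'$ and dot products in $[-1,s+\varepsilon]$, Theorem \ref{linprogbound} gives
\[
N' f_0 (N'-N) \;\le\; \sum_{x\neq y\in S} f(\langle x,y\rangle).
\]
Only dot products in $(s,s+\varepsilon]$ can make $f$ positive, each such term is $O(\varepsilon^{m_s})$, and $N'$ is bounded by a constant depending only on $d$, so the right-hand side is $O(\varepsilon^{m_s})$. For $\varepsilon$ small, this forces $N'\le N$. When $|S|=N$ the sum is nonnegative, which after moving the positive part across becomes
\[
\sum_{\langle x,y\rangle\le s}\bigl(-f(\langle x,y\rangle)\bigr) \;\le\; \sum_{\langle x,y\rangle>s} f(\langle x,y\rangle) \;\le\; C N^2 \varepsilon^{m_s}.
\]
Inserting the polynomial lower bound on the left shows every off-diagonal entry is within $K\varepsilon^{m_s/m}\le K\varepsilon^{1/m}$ of some $t_i$; dot products in $(s,s+\varepsilon]$ are automatically within $\varepsilon\le\varepsilon^{1/m}$ of $s$.

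To conclude, I would form the matrix $B$ by rounding each off-diagonal entry of the Gram matrix $A$ of $S$ to its nearest value in the finite set $X=\{-1,t_1,\dots,t_{k-1},s\}$. Then $A$ and $B$ are $K\varepsilon^{1/m}$-close by the previous step. Since $A$ is positive semi-definite of rank at most $d$, Lemma \ref{lem:close} transfers the required principal-minor conditions (non-negativity of all minors, vanishing of those of size exceeding $d$) to $B$ for $\varepsilon$ small, and Lemma \ref{lem:d-code} then identifies $B$ as the Gram matrix of a $d$-dimensional spherical code whose off-diagonal entries all lie in $[-1,s]$, i.e.\ a $(d,N,s)$-code. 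The main technical hurdle will be the uniform polynomial lower bound $-f(t)\ge c_1\,\mathrm{dist}(t,\{t_i\})^m$: one must glue the local Taylor estimate at each $t_i$ (whose natural exponent is $m_i$, possibly less than $m$) with a strictly positive lower bound away from all roots, exploiting $\mathrm{dist}\le 1$ so that raising to the larger power $m$ only shrinks the bound.
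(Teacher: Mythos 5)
Your proposal is correct and follows essentially the same route as the paper: apply the LP bound with the Delsarte-tight polynomial $f$, use a uniform lower bound $-f(t)\gtrsim\mathrm{dist}(t,\{\text{roots}\})^{m}$ on $[-1,s]$ (the paper's quantity $M'$) to turn the scalar inequality into per-entry $O(\varepsilon^{1/m})$ estimates on the Gram matrix, then round to a matrix with entries among the roots and invoke Lemmas~\ref{lem:close} and~\ref{lem:d-code}. The only cosmetic differences are that you track the multiplicity $m_s$ of $s$ where the paper simply uses $O(\varepsilon)$, and in the cardinality step you appeal to an a priori bound on code size where the paper derives $N'\le N$ directly from $\frac{N'-N}{N'-1}\ge\frac1N$.
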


\begin{proof}

We will begin the proof with several properties of Delsarte-tight sets.

If we have equality in (\ref{linprogbound1}), then (\ref{linprogbound0}) shows that all values $\langle x,y\rangle$ for $x\neq y$, $x,y\in X$ are roots of $f$. Denote the set of roots of $f$ from the segment $[-1,s]$ by $x_1, \ldots, x_l$ and define $R=\{1, x_1,\ldots,x_l\}$. By $M_R$ we mean the set of all matrices of size no greater than $N$ with all entries from $R$.

From the proof of Theorem \ref{linprogbound} we can also conclude that the sum of elements of each matrix $Q_i(\langle x,y\rangle)$ is 0 in case the coefficient $f_i$ of $Q_i$ in the Gegenbauer expansion of $f$ is strictly positive.

In case $f(s)<0$, we can choose any $\varepsilon$ such that $f$ is negative on $[s,s+\varepsilon]$. The linear programming bound will work for $s+\varepsilon$ as well and, since there are no roots of $f$ on $[s,s+\varepsilon]$, any $(d,N,s+\varepsilon)$-code is a $(d,N,s)$-code too. Hence we can assume that $f(s)=0$.

Denote the minimal root of $f$ from $(s,1)$ by $r$ if such a root exists. Otherwise, we take $r=1$. We will denote by $M$ the maximal value of $f(t)/(t-s)$ on $[s,\frac {s+r} 2]$. Then for any positive $\varepsilon$, $\varepsilon\leq \frac {r-s} 2$, the value of $f$ on $[s,s+\varepsilon]$ is not greater than $M\varepsilon$.

Assume we had a $(d,N',s+\varepsilon)$-code $S$ with $N'\geq N$. We want to prove the first part of the theorem and show that $N'=N$ if $\varepsilon$ is small enough. For any pair $x,y$ of distinct points from $S$, the value of $f(\langle x,y\rangle)$ is either non-positive if $\langle x,y\rangle\in[-1,s]$ or, as we have shown above, not greater than $M\varepsilon$ if $\langle x,y\rangle\in[s,s+\varepsilon]$. This implies that

$$\sum_{x,y\in X\atop x\neq y}f(\langle x,y\rangle)\leq N'(N'-1)M\varepsilon.$$

Combining this inequality with Theorem \ref{linprogbound} we get

$$N'f(1)+N'(N'-1)M\varepsilon\geq N'^2 f_0,$$

$$f(1)+(N'-1)M\varepsilon\geq N' f_0.$$

From the tightness of the linear programming bound for the $(d,N,s)$-code we know that $f(1)=Nf_0$. If we assume $N'>N$, we get

$$\varepsilon \geq \frac {N'-N}{N'-1}\cdot \frac {f_0}{M} \geq \frac 1 N\cdot \frac {f_0}{M},$$

which doesn't hold for $\varepsilon< \frac {f_0}{N M}$.

From this moment on, we consider only $\varepsilon< \min\left\{\frac {f_0}{N M},\frac {r-s} 2\right\}$ to prove the second part of the theorem. From Theorem \ref{linprogbound} we conclude that $\sum\limits_{x,y\in X\atop x\neq y}f(\langle x,y\rangle)=0$. All positive elements of this sum are not greater than $M\varepsilon$ so any negative element is at least $-(N^2-N-1)M\varepsilon$.

Denote the multiplicities of roots $x_1,\ldots,x_l$ of $f$ by $m_1,\ldots,m_l$, respectively. We define $M'$ as the minimum of $\max\left\{\left|\frac{f(t)}{(t-x_1)^{m_1}}\right|,\ldots,\left|\frac{f(t)}{(t-x_l)^{m_l}}\right|\right\}$ over all $t\in[-1,1]$. By definition $M'> 0$. For each point $t\in[-1,s]$ there is $x_j$ such that $f(t)\leq -M' |t-x_j|^{m_j}$. If $t=\langle x,y\rangle$ for distinct points $x,y$ from the $(d,N,s+\varepsilon)$-code then we can combine it with the previous inequality:

$$-M' |t-x_j|^{m_j}\geq f(t)\geq -(N^2-N-1)M\varepsilon,$$

$$|t-x_j|\leq \left((N^2-N-1)\frac M {M'} \varepsilon\right)^{1/m_j} \leq K \varepsilon^{1/m},$$

where $K=\max \left\{(N^2-N-1)\frac M {M'},1\right\}$.

We have just proved that the Gram matrix of the $(d,N,s+\varepsilon)$ code is $K \varepsilon^{1/m}$-close to some matrix $B$ from $M_R$. From Lemma \ref{lem:close}, there exists $\varepsilon_0=\varepsilon_0(R,N)$ such that, whenever $K\varepsilon^{1/m} < \varepsilon_0$, if a minor of the Gram matrix of $X$ is 0, then the corresponding minor of $B$ is 0 and, if a minor of the Gram matrix of $X$ is non-negative, then the corresponding minor of $B$ is non-negative. This means that, for $\varepsilon < \left(\frac{\varepsilon_0}{K}\right)^m$, all principal minors of $B$ of size greater than $d$ are 0 and all principal minors of $B$ of size not greater than $d$ are non-negative. By Lemma \ref{lem:d-code}, $B$ is the Gram matrix of a $d$-dimensional spherical code. All non-diagonal entries of $B$ are roots of $f$ so the code is a Delsarte-tight $(d,N,s)$-code.
\end{proof}

\section{Stability of eigenvectors of positive semi-definite matrices}\label{sect:eigendecomp}

For a $N\times N$ matrix $T$, we 
write $\|T\|$ to denote  its spectral norm
and, for a $v\in\R^d$, we write $\|v\|$ to denote its $l_2$-norm. We say that two $N\times N$ matrices $A=[a_{ij}]$ and $B=[b_{ij}]$ are $\delta$-close for $\delta>0$ if
$|a_{ij}-b_{ij}|\leq \delta$ for $i,j=1,\ldots,d$.  It is well-known  that
$A$ and $B$ are $\|A-B\|$-close on the one hand, and if $A$ and $B$ are $\delta$-close, then
$\|A-B\|\leq N\delta$ on the other hand.

For a linear subspace $L$ in some Euclidean space, we write $\cdot|L$ to denote the orthogonal projection into $L$. The following statement is Lemma~2.1 in B\"or\"oczky, B\"or\"oczky, Glazyrin, Kov\'acs \cite{BBGK}, which is needed in the proof of Theorem~\ref{thm:code-stability}.

\begin{lemma}
\label{crosspolytope}
If $d\geq 2$, $\varepsilon\in (0,\frac1{2d})$, and $u_1,\ldots,u_d\in S^{d-1}$ satisfy
$|\langle u_i,u_j\rangle|\leq \varepsilon$ for $i\neq j$, then there exist an orthonormal basis
$w_1,\ldots,w_d$ of $\R^d$ such that $\|u_i-w_i\|\leq 2d\varepsilon$ for $i=1,\ldots,d$.
\end{lemma}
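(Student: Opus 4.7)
The plan is to produce the orthonormal basis by the \emph{L\"owdin} (symmetric) orthogonalization of $u_1,\dots,u_d$, which is cleaner to analyze than Gram--Schmidt because it avoids accumulating errors inductively. Let $U$ be the $d\times d$ matrix whose columns are $u_1,\dots,u_d$, and let $G=U^{\top}U$ be its Gram matrix. Write $G=I+E$, where $E$ is symmetric with $E_{ii}=0$ and $|E_{ij}|\le\varepsilon$ for $i\neq j$. Set
\[
W=UG^{-1/2},
\]
so that $W^{\top}W=G^{-1/2}GG^{-1/2}=I$, and the columns $w_1,\dots,w_d$ of $W$ form an orthonormal basis once $G$ is shown to be positive definite.

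First I would bound the spectral norm of $E$. By Gershgorin's disc theorem (the diagonal of $E$ is zero and each off-diagonal row-sum is at most $(d-1)\varepsilon$), one gets $\|E\|\le (d-1)\varepsilon$. The hypothesis $\varepsilon<1/(2d)$ then gives $\|E\|<1/2$, so the eigenvalues of $G=I+E$ lie in $[\tfrac12,\tfrac32]$ and $G^{-1/2}$ is well defined. In particular $\|U\|=\sqrt{\|G\|}\le \sqrt{1+\|E\|}\le 1+\tfrac12\|E\|<5/4$.

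Next I would estimate, for each $i$,
\[
\|w_i-u_i\|\le \|W-U\|=\|U(G^{-1/2}-I)\|\le \|U\|\cdot \|G^{-1/2}-I\|.
\]
The crux is to control $\|G^{-1/2}-I\|$ in terms of $\|E\|$. By the spectral theorem this equals $\max_{\mu}\bigl|(1+\mu)^{-1/2}-1\bigr|$ over the eigenvalues $\mu$ of $E$, and the main obstacle is the elementary inequality
\[
\bigl|(1+\mu)^{-1/2}-1\bigr|\le |\mu|\qquad\text{for }|\mu|\le\tfrac12,
\]
which I would verify by squaring: for $\mu\ge 0$ it reduces to $(1-\mu)^2(1+\mu)\le 1$ (equivalently $\mu(\mu^2-\mu-1)\le 0$), and for $\mu\in[-\tfrac12,0]$ to $(1-\mu)^2(1+\mu)\ge 1$, both elementary polynomial checks on the relevant interval. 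Hence $\|G^{-1/2}-I\|\le\|E\|\le(d-1)\varepsilon$, and combining the bounds gives
\[
\|w_i-u_i\|\le \tfrac{5}{4}(d-1)\varepsilon\le 2d\varepsilon
\]
for all $d\ge 2$, as required. The only genuinely nontrivial step is the functional-calculus inequality for $G^{-1/2}-I$; everything else is submultiplicativity of the spectral norm together with the passage $\|w_i-u_i\|\le \|W-U\|$ applied to the $i$-th coordinate vector.
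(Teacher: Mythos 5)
Your proof is correct and self-contained. Note that the paper does not actually prove this lemma here: it cites it as Lemma~2.1 of B\"or\"oczky--B\"or\"oczky--Glazyrin--Kov\'acs \cite{BBGK}, so there is no in-paper argument to compare against; I am judging the proposal on its own.

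The L\"owdin (symmetric) orthogonalization $W=UG^{-1/2}$ is a clean route. Each step checks out: Gershgorin gives $\|E\|\le(d-1)\varepsilon<\frac12$, so $G=I+E$ is positive definite and $G^{-1/2}$ is well defined; $\|u_i-w_i\|=\|(U-W)e_i\|\le\|U-W\|$; submultiplicativity reduces matters to $\|U\|$ and $\|G^{-1/2}-I\|$; and $\|U\|=\sqrt{\|G\|}\le\sqrt{1+\|E\|}\le 1+\tfrac12\|E\|<\tfrac54$. The functional-calculus step $\|G^{-1/2}-I\|=\max_{\mu\in\sigma(E)}|(1+\mu)^{-1/2}-1|$ is valid because $E$ is symmetric, and the scalar inequality $|(1+\mu)^{-1/2}-1|\le|\mu|$ on $[-\tfrac12,\tfrac12]$ is correctly reduced (after squaring) to $(1-\mu)^2(1+\mu)\le 1$ for $\mu\ge0$ and $(1-\mu)^2(1+\mu)\ge 1$ for $\mu\le0$, both of which follow from the sign of $\mu(\mu^2-\mu-1)$ on the relevant interval. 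The final arithmetic $\tfrac54(d-1)\varepsilon\le 2d\varepsilon$ holds for all $d\ge2$. Compared with the more common inductive Gram--Schmidt approach (which one would expect to accumulate errors and typically produces a worse, dimension-dependent constant unless done carefully), the symmetric orthogonalization turns the whole estimate into a single spectral bound and in fact yields the slightly sharper constant $\tfrac54(d-1)$, which you then relax to the stated $2d$.
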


\begin{theorem}\label{thm:code-stability}
For any non-trivial positive semi-definite symmetric matrix $B$ of size $N\geq 2$,  one finds $\delta_0>0$ and $K>0$ depending on $B$ with the following property. If $A$ is a symmetric matrix  of the same rank as $B$ and $\delta$-close to $B$, $0<\delta<\delta_0$, then
 there exist $N\times N$ positive semi-definite symmetric matrices $P$ and $Q$  such that $A=P P$ and $B= Q Q$ where $P$ and $Q$ are $K\delta$-close to each other.
\end{theorem}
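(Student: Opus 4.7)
The plan is to exploit the equal-rank hypothesis so that $A$ and $B$ lie in a region where the matrix square root is a smooth function of its argument, and then read off a Lipschitz bound from the holomorphic functional calculus applied to $f(z)=\sqrt{z}$.

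\medskip
\noindent\emph{Step 1 (Shape of the spectrum of $A$).} Let $\lambda_1\ge\cdots\ge\lambda_r>0=\lambda_{r+1}=\cdots=\lambda_N$ denote the eigenvalues of $B$ and $\mu_1\ge\cdots\ge\mu_N$ those of $A$. Since $A-B$ is symmetric with $\|A-B\|\le N\delta$, Weyl's inequalities give $|\mu_i-\lambda_i|\le N\delta$. Choosing $\delta_0<\lambda_r/(2N)$ forces $\mu_i\ge\lambda_r/2>0$ for $1\le i\le r$ and $\mu_i\in[-\lambda_r/2,\lambda_r/2]$ for $i>r$; combined with the hypothesis $\operatorname{rank}A=r$, this pins down $\mu_{r+1}=\cdots=\mu_N=0$. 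Hence $A$ is PSD of rank $r$, with nonzero spectrum inside $[\lambda_r/2,\lambda_1+1]$.

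\medskip
\noindent\emph{Step 2 (Functional calculus that avoids $0$).} Let $\Gamma$ be a simple closed contour in the right half plane $\{\operatorname{Re}z>0\}$ enclosing the segment $[\lambda_r/2,\lambda_1+1]$ and having $0$ in its exterior. Using the principal branch of $z^{1/2}$, the holomorphic functional calculus produces
\[
P := A^{1/2} = \frac{1}{2\pi i}\oint_\Gamma z^{1/2}(zI-A)^{-1}\,dz,\qquad
Q := B^{1/2} = \frac{1}{2\pi i}\oint_\Gamma z^{1/2}(zI-B)^{-1}\,dz.
\]
Spectrally, $\frac{1}{2\pi i}\oint_\Gamma z^{1/2}(z-\mu)^{-1}dz$ equals $\sqrt{\mu}$ for each positive eigenvalue $\mu$ (inside $\Gamma$) and equals $0=\sqrt{0}$ for the zero eigenvalue (outside $\Gamma$); so $P$ and $Q$ are indeed the unique PSD square roots of $A$ and $B$.

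\medskip
\noindent\emph{Step 3 (Resolvent identity and bound).} Subtracting and applying $(zI-A)^{-1}-(zI-B)^{-1}=(zI-A)^{-1}(A-B)(zI-B)^{-1}$ gives
\[
P-Q = \frac{1}{2\pi i}\oint_\Gamma z^{1/2}(zI-A)^{-1}(A-B)(zI-B)^{-1}\,dz.
\]
By Step~1, the contour $\Gamma$ stays at distance $\ge\eta>0$ from $\operatorname{spec}(A)\cup\operatorname{spec}(B)$, where $\eta$ depends only on $B$, so $\|(zI-A)^{-1}\|,\|(zI-B)^{-1}\|\le\eta^{-1}$ on $\Gamma$; since $|z^{1/2}|$ and the length of $\Gamma$ are fixed constants, we conclude $\|P-Q\|\le C_B\|A-B\|$ with $C_B$ depending only on $B$. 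Converting between max-norm and spectral norm at the usual cost of a factor $N$ gives that $P$ and $Q$ are $(C_B N)\delta$-close, so $K:=C_B N$ works.

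\medskip
\noindent\emph{Main obstacle.} Near a matrix with degenerate eigenvalue $0$, the matrix square root is in general only H\"older-$\tfrac12$ continuous; small positive eigenvalues emerging out of the kernel under a symmetric perturbation would ruin any Lipschitz estimate. The equal-rank hypothesis is exactly what rules this out, forcing the perturbed spectrum to reuse the old kernel (up to a rotation of directions) rather than bifurcate into small positives. This preserves a uniform spectral gap and lets the contour $\Gamma$ be drawn safely around the positive spectrum while leaving $0$ outside.
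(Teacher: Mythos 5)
Your proof is correct, and it takes a genuinely different route from the paper's.

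The paper's argument is an elementary eigenvector-perturbation analysis: it diagonalizes $B$, shows by a pigeonhole/projection argument that the eigenvectors of (a conjugate of) $A$ are $O(\delta)$-close to eigenvectors of $B$, rebuilds a nearby orthonormal eigenbasis via Lemma~\ref{crosspolytope}, and then compares the two square roots directly on diagonal matrices. This is longer and uses no machinery beyond linear algebra, but it yields a fully explicit constant $K=\frac{85N^5\max\{\sqrt{\|B\|},1\}}{\Delta}$, which the paper then feeds into the explicit numerical constants $C_8$ and $C_{24}$ in Sections~\ref{sect:8} and~\ref{sect:24}. Your argument instead pins down the spectrum of $A$ via Weyl's inequality plus the equal-rank hypothesis (correctly: once $\mu_1,\dots,\mu_r>0$ and $\operatorname{rank}A=r$, the remaining eigenvalues must all vanish, so $A$ is PSD), draws a contour $\Gamma$ separating the positive spectrum from $0$, and uses the resolvent identity in the Cauchy integral representation of $z^{1/2}$ to get $\|P-Q\|\le C_B\|A-B\|$. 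This is considerably shorter and makes the role of the equal-rank hypothesis transparent (it is precisely what keeps $0$ uniformly isolated so the contour can exclude it), at the cost of a less explicit constant: $C_B$ depends on the contour length, $\sup_\Gamma|z^{1/2}|$, and the resolvent gap $\eta$, all controlled by $\lambda_r$ and $\lambda_1$ but not written in closed form. Both approaches establish the Lipschitz dependence; yours is cleaner if one only needs the existence of $K$, while the paper's is needed for the numerical bounds it pursues. One could tighten your constant by noting that for a symmetric matrix $\|(zI-A)^{-1}\|$ equals the reciprocal of the distance from $z$ to $\operatorname{spec}(A)$, and choosing $\Gamma$ to be, e.g., a rectangle at distance $\lambda_r/4$ from $[\lambda_r/2,\lambda_1+1]$, but this would still not obviously beat the paper's polynomial-in-$N$ bound.
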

\noindent{\bf Remark } We may choose $K=\frac{85N^5\max\{\sqrt{\|B\|},1\}}{\Delta}$ where $\Delta>0$ is the minimum of the minimal gap between consecutive eigenvalues of $B$ and the smallest positive eigenvalue of $B$.

\begin{proof}
  For a positive semi-definite diagonal matrix $T$, we write $\sqrt{T}$ to denote the 
positive semi-definite diagonal matrix whose square is $T$.

Let $r\geq 1$ be the common rank of $A$ and $B$. In addition, let
 $0<\lambda_1<\ldots<\lambda_k$, $k\leq r$, be the different positive eigenvalues of $B$ with corresponding
mutually orthogonal  eigenspaces
$L_1,\ldots,L_k$, $k\leq r$, whose dimensions naturally add up to $r$, and hence 
$L=L_1+\ldots+L_k$ is $r$ dimensional. 
Let $\Delta\in(0,1]$ be maximal such that $\lambda_1\geq \Delta$, and 
$\lambda_j-\lambda_i\geq\Delta$ for $j>i$. We assume that $\delta$ is small enough to have
\begin{equation}
\label{deltacond}
\frac{14N^4}{\Delta}\cdot \delta<\frac1{2N}.
\end{equation}
Let $D$ be the  diagonal matrix $D$ such that the first $r$ diagonal entries are positive and increasing with the index,
and $\lambda_i$ occurs as diagonal entry $\dim L_i$ times.

There exists an $N\times N$ orthogonal matrix $M$ such that $M^{-1}BM=D$, and let $E=M^{-1}AM-D$.  
Since $\|A-B\| \leq N\delta$, we deduce that $\|E\|\leq N\delta$.
We write coordinates in $\R^N$ with respect to the new orthonormal basis obtained via $M$. In particular, $D$ acts on $L_i$ by multiplication by $\lambda_i$, $i=1,\ldots,k$.
For $j=1,\ldots,k$, let $I_j\subset\{1,\ldots,k\}$ be the set of indices of basis vectors contained in $L_j$, and hence
$I_1\cup\ldots\cup I_k=\{1,\ldots,r\}$.

Let $v_1,\ldots,v_N$ be an orthonomal set of eigenvectors of $M^{-1}AM$.
We claim that there exist at least $r$ indices $i\in\{1,\ldots,N\}$ such that $\|v_i|L\|\geq \frac1{\sqrt{N}}$, or in other words, we may reindex $v_1,\ldots,v_N$ in a way such that
\begin{equation}
\label{vprojL}
\|v_i|L\|\geq \frac1{\sqrt{N}}\mbox{ \ if $i\leq r$}.
\end{equation}
We suppose that (\ref{vprojL}) does not hold, and seek a contradicton. Obviously $r<N$ in this case.
The indirect hypothesis yields that there exists a subset $J\subset \{1,\ldots,N\}$ of indices of cardinality
$N-r+1$ 
such that $\|v_i|L\|< \frac1{\sqrt{N}}$ for $i\in J$, and hence
\begin{equation}
\label{vivi}
 \mbox{$\|v_i|L^\bot\|> \sqrt{1-\frac1N}$ for $i\in J$. }
\end{equation}
In addition, if $i\neq j$ for $i,j\in J$, then
\begin{equation}
\label{vivj}
\left|\left\langle (v_i|L^\bot),(v_j|L^\bot)\right\rangle\right|=|\left\langle (v_i|L),(v_j|L)\right\rangle|<\frac1N.
\end{equation}
As $L^\bot$ is $N-r$ dimensional, there exists coefficient $\gamma_i\in\R$ for $i\in I$ with
$\gamma=\max_{i\in J}|\gamma_i|>0$ such that
$\sum_{i\in J}\gamma_i(v_i|L^\bot)=o$. There exists a $j\in J$ such that $\gamma=|\gamma_j|$. We deduce from the triangle inequality, (\ref{vivi}) and (\ref{vivj}) that
\begin{equation}
\label{manyvivj}
0=\left|\left\langle (v_j|L^\bot),\sum_{i\in J}\gamma_i(v_j|L^\bot)\right\rangle\right|>
\gamma\left( 1-\frac1N\right)-(N-r)\gamma\cdot\frac1N\geq 0.
\end{equation}
This contradiction proves (\ref{vprojL}).

For $i\leq r$, we have
\begin{equation}
\label{diagdiff}
\|(\mu_i {\rm Id}-D)v_i\|=\|Ev_i\|\leq N\delta.
\end{equation}
According to (\ref{vprojL}), there exists $\alpha\in I$ such that the $\alpha$th coordinate of $v$ is at least  $1/\sqrt{Nr}$ in absolute value. If $\alpha\in I_j$, then we have
\begin{equation}
\label{lambdamu}
|\mu_i-\lambda_j|\leq \sqrt{Nr}N\cdot\delta\leq N^2\delta,
\end{equation}
and hence (\ref{deltacond}) implies
\begin{equation}
\label{sqrtlambdamu}
|\sqrt{\lambda_j}-\sqrt{\mu_i}|=\frac{|\mu_i-\lambda_j|}{\sqrt{\lambda_j}+\sqrt{\mu_i}}\leq \frac{N^2}{\sqrt{\Delta}}\cdot \delta\leq \frac{N^2}{\Delta}\cdot \delta.
\end{equation}
In addition, (\ref{deltacond}), (\ref{lambdamu}) and the rank of $B$ being $r$ yield
\begin{equation}
\label{muipos}
\mbox{$\mu_i>\Delta/2$ for $i=1,\ldots,r$ and $\mu_i=0$ for $i>r$.}
\end{equation}
If $\lambda_q\neq \lambda_j$, then $|\mu_i-\lambda_q|> \Delta/2$ by (\ref{deltacond}) and (\ref{lambdamu}).
It follows from this observation, from (\ref{lambdamu}) and (\ref{muipos}) that
writing $v_i=(t_1,\ldots,t_N)$, we have
$$
|t_\alpha|\leq  \frac{2N^2}{\Delta}\cdot \delta\mbox{ \ provided $\alpha\not\in I_j$,}
$$
and hence
\begin{equation}
\label{eigenother}
\|v_i|L_j^\bot\|\leq  \frac{2N^2\sqrt{N}}{\Delta}\cdot \delta.
\end{equation}
We conclude from (\ref{deltacond}) and (\ref{eigenother}) that
\begin{equation}
\label{vprojLj}
\|v_i|L_j\|\geq  1-\frac{4N^5}{\Delta^2}\cdot \delta^2>\frac34\\
\end{equation}

We deduce from (\ref{vprojLj}) that for any $i\in I$, there exists a unique
$j(i)\in\{1,\ldots,k\}$ such that $\|v_i|L_{j(i)}\|>\frac34$, and we define $\tilde{v}_i=v_i|L_{j(i)}$.
In particular, if $j=j(i)=j(l)$ for $i,j\in I$, then  (\ref{deltacond}) and (\ref{eigenother}) imply that
\begin{equation}
\label{tildevi}
\begin{array}{rclll}
\|\tilde{v}_i\|&\geq & 1-\frac{4N^5}{\Delta^2}\cdot \delta^2& >1-\frac{\delta}{\Delta}&\\
|\langle \tilde{v}_i,\tilde{v}_l\rangle|&\leq &\frac{4N^5}{\Delta^2}\cdot \delta^2<
\frac{\delta}{\Delta}&&\mbox{ \ if $i\neq l$}.
\end{array}
\end{equation}
Using (\ref{deltacond}) and similar argument as in (\ref{manyvivj}) shows that for any $L_j$, $j=1,\ldots,k$, 
the vectors of the form $\tilde{v}_i$ with $i\in I$ that are contained in $L_j$ are independent, therefore
their number of is at most ${\rm dim}\,L_j$. We deduce from pigeon hole principe that possibly after renumbering
$v_1,\ldots,v_N$, we may asume that $\tilde{v}_i\in L_j$ if and only if $i\in I_j$.

We claim that there exist an orthonormal basis $w_1,\ldots,w_r$ of $L$ such that
if $i\in I_j$, then $w_i\in L_j$ and 
\begin{equation}
\label{wiviL}
\|w_i-v_i\|\leq  \frac{7N^3}{\Delta}\cdot \delta.
\end{equation}
For any $i\leq r$, we set $v'_i=\tilde{v}_i/\|\tilde{v}_i\|\in L_j\cap S^{N-1}$, and hence (\ref{deltacond}) and (\ref{tildevi}) yield that if $i,l\in I_j$, $i\neq l$, then
\begin{equation}
\label{primevi}
|\langle v'_i,v'_l\rangle|\leq \left(1-\frac{\delta}{\Delta}\right)^{-2}\frac{\delta}{\Delta}<\frac{2\delta}{\Delta}.
\end{equation}
In addition,  combining (\ref{eigenother}) and (\ref{tildevi}) implies
\begin{equation}
\label{viprimevi}
\| v'_i-v_i\|\leq \frac{3N^{2.5}}{\Delta}\cdot \delta.
\end{equation}
 On the other hand, for any $j\in\{1,\ldots,k\}$, we deduce from  (\ref{deltacond}), (\ref{primevi}) and Lemma~\ref{crosspolytope} that there exist an orthornormal basis $\{w_i:\,i\in I_j\}$ of $L_j$ such that
\begin{equation}
\label{primeviwi}
\| w_i-v'_i\|\leq \frac{4N}{\Delta}\cdot \delta\mbox{ \ for $i\in I_j$},
\end{equation}
Combining (\ref{viprimevi}) and (\ref{primeviwi}) yields (\ref{wiviL}).

Next we extend the basis $w_1,\ldots,w_r$  of $L$ in (\ref{wiviL}) into an orthonormal basis $w_1,\ldots,w_N$ of $\R^N$ such that
\begin{equation}
\label{wiviRN}
\|w_i-v_i\|\leq  \frac{28N^4}{\Delta}\cdot \delta
\mbox{ \ for $i=1,\ldots,N$}.
\end{equation}
Having (\ref{wiviL}) at hand, we may assume that $r<N$.  If $1\leq i\leq r$ and $r<j\leq N$, then
$\langle v_j,v_i\rangle =0$ and (\ref{wiviL}) yield that
$|\langle w_i,v_j\rangle|\leq  \frac{7N^3}{\Delta}\cdot \delta$. Therefore if $r<j\leq N$, then
\begin{equation}
\label{vjLLbot}
\begin{array}{rcl}
\left\|v_j|L\right\|&\leq&  \frac{7N^4}{\Delta}\cdot \delta\\[1ex]
\left\|v_j|L^\bot\right\|&\geq&  1-\frac{7N^4}{\Delta}\cdot \delta
\end{array}
\end{equation}
From this point, we follow a similar path as in the case of (\ref{wiviL}). 
For $j=r+1,\ldots,N$, we write $\tilde{v}_j=v_j|L^\bot$ and 
$v'_j=\tilde{v}_j/\|\tilde{v}_j\|\in L^\bot\cap S^{N-1}$,
and hence (\ref{vjLLbot}) yields that 
\begin{equation}
\label{vjprimevj}
\|v_j-v'_j\|\leq  \frac{14N^4}{\Delta}\cdot \delta,
\end{equation}
and if $r+1\leq j<l\leq N$, then (\ref{deltacond}) and (\ref{vjLLbot}) imply that 
\begin{eqnarray}
\nonumber
|\langle v'_i,v'_l\rangle|&\leq& \left(1-\frac{7N^4}{\Delta}\cdot \delta\right)^{-2}
|\langle \tilde{v}_i,\tilde{v}_l\rangle|<2|\langle \tilde{v}_i,\tilde{v}_l\rangle|=2
|\langle (v_j|L),(v_l|L)\rangle|\\
\label{primevjlLbot}
&< &2\cdot\frac{7N^4}{\Delta}\cdot \delta\cdot\frac1{2N}<
\frac{7N^3}{\Delta}\cdot \delta.
\end{eqnarray}
In turn, we conclude from  (\ref{deltacond}), (\ref{primevjlLbot}) and Lemma~\ref{crosspolytope} that there exist an orthornormal basis $w_{r+1},\ldots,w_N$ of $L^\bot$ such that
\begin{equation}
\label{primevjwj}
\| w_j-v'_j\|\leq \frac{14N^4}{\Delta}\cdot \delta\mbox{ \ for $j=r+1,\ldots,N$},
\end{equation}
Combining (\ref{wiviL}), (\ref{vjprimevj}) and (\ref{primevjwj}) yields (\ref{wiviRN}).

We write $\widetilde{D}$ to denote the diagonal matrix whose first $r$ diagonal entries are $\mu_1,\ldots,\mu_r>0$ in this order, and the rest of the entries are $0$ (compare (\ref{muipos})).
For the $N\times N$ orthogonal transformation $F$
defined by $Fw_i=v_i$ for $i=1,\ldots,N$, we have
$$
\widetilde{D}=FMAM^{-1}F^{-1}.
$$
Writing $\widetilde{E}=\sqrt{\widetilde{D}}-\sqrt{D}$, it follows from (\ref{sqrtlambdamu}) that
\begin{equation}
\label{Etilde}
\|\widetilde{E}\|\leq \frac{N^2}{\Delta}\cdot \delta.
\end{equation}
In addition, (\ref{deltacond}) and (\ref{wiviRN}) yield that
\begin{equation}
\label{FId}
\|F^{-1}-{\rm Id}\|\leq  \frac{28N^5}{\Delta}\cdot \delta
\mbox{ \ and \ }
\|F-{\rm Id}\|\leq  \frac{28N^5}{\Delta}\cdot \delta<1.
\end{equation}

For the positive semi-definite $P=M^{-1}F^{-1}\sqrt{\widetilde{D}}FM$ and $Q=M^{-1}\sqrt{D}M$
matrices, we have $A=PP$ and $B=QQ$, and
\begin{eqnarray*}
\|P-Q\|&=&\|M^{-1}F^{-1}\sqrt{\widetilde{D}}FM-M^{-1}\sqrt{D}M\|=
\|M^{-1}(F^{-1}\sqrt{\widetilde{D}}F-\sqrt{D})M\|\\
&\leq& 
\|F^{-1}\sqrt{\widetilde{D}}F-\sqrt{D}\|=\|F^{-1}(\sqrt{D}+\widetilde{E})F-\sqrt{D}\|\\
&\leq &
\|(F^{-1}-{\rm Id})\sqrt{D}(F-{\rm Id})+(F^{-1}-{\rm Id})\sqrt{D}+\sqrt{D}(F-{\rm Id})\| + \|F^{-1}\widetilde{E}F\|\\
&\leq&
\sqrt{\lambda_r}\left(\|F^{-1}-{\rm Id}\|\cdot \|F-{\rm Id}\|+ \|F^{-1}-{\rm Id}\| + \|F-{\rm Id}\| \right)+\|\widetilde{E}\|
\end{eqnarray*}
Combining (\ref{Etilde}) and (\ref{FId}) implies
$$
\|P-Q\|\leq \frac{85N^5\max\{\sqrt{\|B\|},1\}}{\Delta}\cdot \delta,
$$
thus we may choose $K=\frac{85N^5\max\{\sqrt{\|B\|},1\}}{\Delta}$.
\end{proof}

Theorem \ref{thm:general} immediately follows from Theorems \ref{thm:weak-stability} and \ref{thm:code-stability}.

\begin{proof}[Proof of Theorem \ref{thm:general}]
We use notation from Theorem \ref{thm:weak-stability}. The Gram matrix $A$ of any $(d,N,s+\varepsilon)$-code is $K\varepsilon^{1/m}$-close to the Gram matrix $B$ of a Delsarte-tight $(d,N,s)$-code for sufficiently small $\varepsilon$. Then, by Theorem \ref{thm:code-stability}, $B$ and $A$ can be represented as $QQ$ and $PP$, respectively, where $Q$ and $P$ are positive semi-definite symmetric matrices which are $\frac{85N^5\max\{\sqrt{\|B\|},1\}}{\Delta} K \varepsilon^{1/m}$-close, where $\Delta$ is the minimum of the minimal gap between two consecutive eigenvalues of $B$ and the minimal positive eigenvalue of $B$. The matrices $P$ and $Q$ define by their column vectors two $d$-dimensional spherical codes whose Gram matrices are $A$ and $B$, respectively. The corresponding columns $P^i$ and $Q^i$, $1\leq i\leq N$, are unit $d$-dimensional vectors whose components differ by no more than $\frac{85N^5\max\{\sqrt{\|B\|},1\}}{\Delta} K \varepsilon^{1/m}$ so $\|P^i-Q^i\|\leq \sqrt{d}\cdot \frac{85N^5\max\{\sqrt{\|B\|},1\}}{\Delta} K \varepsilon^{1/m}$. Using $\angle(P^i,Q^i)\leq \frac {\pi} 2 \|P^i-Q^i\|$, we get that the sphecrical distance between $P^i$ and $Q^i$ is not greater than $C\varepsilon^{1/m}$, where $C=\frac {\pi} 2 \sqrt{d}\cdot \frac{85N^5\max\{\sqrt{\|B\|},1\}}{\Delta} K$.
\end{proof}

\section{Stability of the $(8,240,1/2)$-code}\label{sect:8}

The set of minimal vectors of the $E_8$ lattice forms the Delsarte-tight $(8,240,1/2)$-code. The polynomial $f$ for this code is

$$f(t)=\frac {320} {3} (t + 1)(t + 1/2)^2 t^2 (t-1/2) =$$

$$= Q_0 + \frac {16} 7 Q_1 + \frac {200} {63} Q_2 + \frac {832} {231} Q_3 + \frac {1216}{429} Q_4 + \frac {5120} {3003} Q_5 + \frac {2560}{4641} Q_6,$$

where $f_0=1$ and $f(1)=240$.

It follows from the proof of Theorem \ref{thm:weak-stability} that for any $\varepsilon<5\cdot 10^{-6}$ any $(8,N',1/2+\varepsilon)$-code must have no more than $240$ points. For the weak stability, slightly generalizing the outcome of Theorem \ref{thm:weak-stability}, we get that for any $\varepsilon < \frac 1 {(6\cdot 10^7\cdot 240!\cdot 2^{240})^2}$ the combinatorial structure of a $(8,240,1/2+\varepsilon)$-code is the same as for the $(8,240,1/2)$-code and all dot products in a $(8,240,1/2+\varepsilon)$-code differ by not more than $3\cdot 10^7\sqrt{\varepsilon}$ from $0$ or $-1/2$ or by not more than $3\cdot 10^7\varepsilon$ from $-1$ or $1/2$. From now on we consider only $(8,240,1/2+\varepsilon)$-codes as described above. In what follows, we will show that the linear programming approach and the combinatorial structure of the code force all dot products to be within $O(\varepsilon)$ of the dot products of the $(8,240,1/2)$-code.

For any two points $x, y$ in the $(8,240,1/2)$-code such that $\langle x,y\rangle=-1/2$, there is a point $z$ such that $\langle x,z\rangle=\langle y,z\rangle=1/2$. This is true because $x+y$ must belong to $E_8$ as well, has length 1 and forms the angles of $\pi/3$ with both $x$ and $y$. From here we conclude that for any two points $x,y$ of a $(8,240,1/2+\varepsilon)$-code such that $\langle x,y\rangle$ is close to $-1/2$ there exists a point $z$ with $\langle x,z\rangle$ and $\langle y,z\rangle$ not smaller than $1/2 - 3\cdot 10^7\varepsilon$. From the triangle inequality for spherical distances, we get that $\langle x,y\rangle \geq -1/2 -  9\cdot 10^7\varepsilon$.

Similarly to the proof of Theorem \ref{linprogbound}, for any $(d,N,s)$-code $X$, any $f=f_0Q_0+f_1Q_1+\ldots+f_kQ_k$ with non-negative coefficients, and any $i\in [1,k]$,

\begin{equation}
\label{linprogbound-i}
Nf(1)+\sum_{x,y\in X\atop x\neq y}f(\langle x,y\rangle)\geq N^2 f_0 + \sum_{x,y\in X} f_i Q_i(\langle x,y\rangle)\geq N^2 f_0.
\end{equation}

For $N$ and $f$ satisfying the conditions of a Delsarte-tight code, $Nf(1)=N^2 f_0$ so we get

\begin{equation}\label{qi-bound}
\sum_{x,y\in X\atop x\neq y}f(\langle x,y\rangle)\geq \sum_{x,y\in X} f_i Q_i(\langle x,y\rangle)\geq 0.
\end{equation}

As we know, $f(\langle x,y\rangle)$ for $x\neq y$ in our code is either non-positive or not greater than $\max\limits_{t\in[1/2,1]} \left\{\frac {f(t)} {t-1/2}\right\} \varepsilon = 480\varepsilon$. Therefore, for each $i\in[1,6]$,

\begin{equation}\label{qi-bound-8}
0\leq \sum_{x,y\in X} Q_i(\langle x,y\rangle) \leq \frac 1 {f_i} (240^2-240)\cdot 480\varepsilon.
\end{equation}

We note that for the $(8,240,1/2)$-code all six sums are 0.

We use the inequality $0\leq \sum_{x,y\in X} Q_2(\langle x,y\rangle)$. For $d=8$, $Q_2(t)=\frac 8 7 t^2 - \frac 1 7$. If $\langle x,y\rangle$ is close to $1/2$, $Q_2(\langle x,y\rangle)\leq Q_2(1/2) + 7\cdot 10^7\varepsilon$. If $\langle x,y\rangle$ is close to $0$, $Q_2(\langle x,y\rangle)\leq Q_2(0) + 2\cdot 10^{15}\varepsilon$. If $\langle x,y\rangle$ is close to $-1$, $Q_2(\langle x,y\rangle)\leq Q_2(-1)$. If $\langle x,y\rangle$ is close to $-1/2$ and less than $-1/2$, $Q_2(\langle x,y\rangle)\leq Q_2(-1/2) + 3\cdot 10^8\varepsilon$. Overall, for all considered cases if $\langle x,y\rangle$ is close to $\alpha$, $Q_2(\langle x,y\rangle)- Q_2(\alpha) \leq 2\cdot 10^{15}\varepsilon$. Since the sum for $Q_2$ is 0 on the $(8,240,1/2)$-code, the total sum of $Q_2(\langle x,y\rangle)- Q_2(\alpha)$ is non-negative too. The sum of non-negative terms from here is not greater than $(240^2-240)\cdot 2 \cdot 10^{15}\varepsilon< 2\cdot 10^{20}\varepsilon$. The only unobserved case so far is the one when $\langle x,y\rangle$ is close to $-1/2$ and not smaller than $-1/2$. In this case, $Q_2(\langle x,y\rangle)- Q_2(-1/2)$ is non-positive and cannot be larger by its absolute value than the sum of all positive elements. Therefore,

$$Q_2(\langle x,y\rangle)- Q_2(-1/2) \geq - 2\cdot 10^{20}\varepsilon,$$

$$\langle x,y\rangle \leq -1/2 + 2\cdot 10^{20}\varepsilon.$$

In the $(8,240,1/2)$-code for any pair of points $x,y$ such that $\langle x,y\rangle=0$, there exist 12 more points $z$ such that $\langle x,z\rangle = \langle y,z\rangle = 1/2$ (due to the association scheme structure this number is 12 for any such pair of $x$ and $y$). Note that all these points including $x$ and $y$ belong to the $6$-dimensional sphere of radius $\frac 1 {\sqrt{2}}$ with the center at $\frac {x+y} 2$. 14 points on the same $6$-dimensional sphere of radius $\frac 1 {\sqrt{2}}$ with the minimal distance equal to $1=\sqrt{2}\cdot \frac 1 {\sqrt{2}}$ must be located in the vertices of the $7$-dimensional cross-polytope. All pairs of points with $\langle x,y\rangle=0$ are then partitioned into 7-tuples from the same $7$-dimensional cross-polytope.

Consider two pairs of points $x,y$ and $z,t$ from the $(8,240,1/2+\varepsilon)$-code $X$ such that $\langle x,y\rangle$ and $\langle z,t\rangle$ are close to 0 and the other four dot products between pairs of these points are close to $1/2$. We denote $\langle x,y\rangle$ by $\alpha$ and $\langle z,t\rangle$ by $\beta$. Assume also that $\langle x,z\rangle = 1/2+\delta_1$, $\langle x,t\rangle=1/2+\delta_2$, $\langle y,z\rangle=1/2+\delta_3$, $\langle y,t\rangle = 1/2+\delta_4$. All $|\delta_i|$, $1\leq i\leq 4$, are not greater than $3\cdot 10^7\varepsilon$. The Gram matrix for these four points should be positive semi-definite so its determinant is non-negative:

\begin{equation}\label{0-det}
0\leq
\begin{vmatrix}
   1 & \alpha & \frac 1 2 +\delta_1 & \frac 1 2 + \delta_2 \\
    \alpha & 1 & \frac 1 2 +\delta_3 & \frac 1 2 +\delta_4 \\
    \frac 1 2 +\delta_1 & \frac 1 2 +\delta_3 & 1 & \beta \\
    \frac 1 2 +\delta_2 & \frac 1 2 +\delta_4 & \beta & 1
\end{vmatrix}
\leq
\begin{vmatrix}
   1 & \alpha & \frac 1 2 & \frac 1 2 \\
    \alpha & 1 & \frac 1 2 & \frac 1 2 \\
    \frac 1 2 & \frac 1 2 & 1 & \beta \\
    \frac 1 2 & \frac 1 2 & \beta & 1
\end{vmatrix}
+2\cdot 10^9 \varepsilon=
\end{equation}

$$=(1-\alpha)(1-\beta)(\alpha\beta+\alpha+\beta) + 2\cdot 10^9 \varepsilon.$$

Hence we get that $\alpha\beta+\alpha+\beta\geq -3\cdot 10^9 \varepsilon$. Since $\alpha\beta\leq 9\cdot 10^{14} \varepsilon$, $\alpha+\beta \geq - 10^{15} \varepsilon$. For each 7-dimensional cross-polytope described above we can average these inequalities over all pairs and get that the sum of the seven dot products in it is at least $-35\cdot 10^{14}\varepsilon$. On the other hand, if we set aside one pair of opposite vertices in a cross-polytope and average over six pairs , we will get that their sum is at least $-30\cdot 10^{14}\varepsilon$. Overall, if we set aside one fixed pair of points $x_0,y_0$ such that $\langle x_0,y_0\rangle$ is close to 0, then all other pairs of such points on average has a dot product at least $-5\cdot 10^{14}\varepsilon$

Using inequality (\ref{qi-bound-8}) for the Gram matrix (matrix for $Q_1$) we get that

$$0\leq \sum_{x,y\in X} \langle x,y\rangle \leq 2\cdot 10^7 \varepsilon.$$

For all pairs $x,y$, where $\langle x,y\rangle$ is close to $-1$, $-1/2$, $1/2$, $\langle x,y\rangle$ differs from all these numbers by no more than $2\cdot 10^{20}\varepsilon$. If we fix one pair of points $x_0,y_0$ such that $\langle x_0,y_0\rangle$ is close to 0, then, using all lower bounds above, we get

$$\langle x_0,y_0\rangle - (240^2-240-1)\cdot 2\cdot 10^{20}\varepsilon \leq 2\cdot 10^7 \varepsilon,$$

$$\langle x_0,y_0\rangle \leq 2\cdot 10^{25}  \varepsilon.$$

Hence for all $x,y$ such that $\langle x,y\rangle$ is close to 0, $\langle x,y\rangle \leq 2\cdot 10^{25}  \varepsilon$. We can use this to bound $\langle x,y\rangle$ from below as well, using the inequality $\alpha+\beta \geq - 10^{15} \varepsilon$: $\langle x,y\rangle \geq - 10^{15} \varepsilon - 2\cdot 10^{25}  \varepsilon\geq - 3\cdot 10^{25}  \varepsilon$.

Combining all the bounds we obtained, the $(8,240,1/2+\varepsilon)$-code is $3\cdot 10^{25}\varepsilon$-close to the $(8,240,1/2)$-code. Coupling this with Theorem \ref{thm:code-stability} we get Theorem \ref{thm:8}. The proof follows the proof of Theorem \ref{thm:general} almost word-for-word. The constant $C_8$ we obtain satisfies

$$C_8\geq \frac {\pi} 2 \sqrt{8}\cdot \frac{85\cdot 240^5\max\{\sqrt{\|B\|},1\}}{\Delta}\cdot 3\cdot 10^{25},$$

where $B$ is the Gram matrix of the $(8,240,1/2)$-code and $\Delta$ is the minimum of the minimal gap between two consecutive eigenvalues of $B$ and the minimal positive eigenvalue of $B$. $\|B\|$ is not greater than 240 and $\Delta$ may be estimated by using the root separation bounds from \cite{rum79}. Overall, we get that it is sufficient to take $C_8=10^{206}$.

\section{Stability of the $(24,196560,1/2)$-code}\label{sect:24}

The set of minimal vectors of the Leech lattice forms the Delsarte-tight $(24,196560,1/2)$-code with the polynomial

$$f(t)= \frac {1490944}{15} (t+1) (t+1/2)^2(t+1/4)^2t^2(t-1/4)^2(t-1/2)=$$

$$= Q_0 +\frac {48}{23} Q_1 + \frac {1144} {425} Q_2 + \frac {12992}{3825} Q_3 + \frac {73888}{22185} Q_4 + \frac {2169856}{687735} Q_5 +$$

$$+ \frac {59062016}{25365285} Q_6 + \frac {4472832}{2753575} Q_7 + \frac {23855104}{28956015} Q_8 + \frac {7340032}{20376455} Q_9 + \frac {7340032}{80848515} Q_{10},$$

so that $f_0=1$ and $f(1)=196560$.

The structure of the proof will be generally similar to the one for the $(8,240,1/2)$-code. Firstly, we use the proof of Theorem \ref{thm:weak-stability} and see that for $\varepsilon<10^{-11}$ any $(24,N',1/2+\varepsilon)$-code with $N'\geq 196560$ must have precisely $196560$ points. The proof of Theorem \ref{thm:weak-stability} also implies that, when $\varepsilon< \frac 1 {(4\cdot 10^{16} \cdot 196560!\cdot 4^{196560})^2}$, the structure of this code is combinatorially the same as the one of the $(24,196560,1/2)$-code and all inner products close to $1/2$ and $-1$ must be within $2\cdot 10^{16}\varepsilon$ from $1/2$ and $-1$, respectively, all inner products close to $-1/4$, $-1/2$, $0$, $1/4$ must be within $2\cdot 10^{16}\sqrt{\varepsilon}$ from $-1/4$, $-1/2$, $0$, $1/4$, respectively. From now on we consider only codes with these constraints on dot products. For the main part of the proof, we will use the linear programming approach and the combinatorics of the $(24,196560,1/2)$-code to show that all inner products of such $(24,196560,1/2+\varepsilon)$-codes are, in fact, within $O(\varepsilon)$ of their counterparts among minimal vectors of the Leech lattice.

Inequality (\ref{qi-bound}) is true for any case of a Delsarte-tight code so we can find an analogue of inequality (\ref{qi-bound-8}) in the 24-dimensional case. In a $(24,196560,1/2)$-code $X$, $f(\langle x,y\rangle)$ for $x,y\in X$, $x\neq y$, is either non-positive or not greater than $\max\limits_{t\in[1/2,1]} \left\{\frac {f(t)} {t-1/2}\right\} \varepsilon = 393120 \varepsilon$. Therefore, for each $i\in[1,10]$,

\begin{equation}\label{qi-bound-24}
0\leq \sum_{x,y\in X} Q_i(\langle x,y\rangle) \leq \frac 1 {f_i} (196560^2-196560)\cdot 393120\varepsilon < 2\cdot 10^{17} \varepsilon.
\end{equation}

Similarly to the 8-dimensional case, we note that these sums are identically 0 if $X$ is the unique $(24,196560,1/2)$-code.

We will split all pairs of points from $X^2$ into groups $A_\alpha$, $\alpha=-1$, $-1/2$,$ -1/4$, $0$, $1/4$, $1/2$, $1$, such that $(x,y)\in A_\alpha$ if $\langle x,y\rangle$ is close to $\alpha$ (the corresponding pair of points in the $(24,196560,1/2)$-code has inner product~$\alpha$). By $S_\alpha$ we denote $\sum_{(x,y)\in A_\alpha} (\langle x,y\rangle -\alpha)$. By using inequalities (\ref{qi-bound-24}), we will show that all $S_\alpha$ are $O(\varepsilon)$.

This immediately holds for $S_{1}$ (since it is identically 0), $S_{1/2}$, $S_{-1}$:

$$-(196560^2-196560)\cdot 2 \cdot 10^{16} \varepsilon \leq S_{1/2}\leq (196560^2-196560)\varepsilon \text{, hence } |S_{1/2}|\leq 8\cdot 10^{26} \varepsilon;$$

$$0\leq S_{-1} \leq (196560^2-196560)\cdot 2 \cdot 10^{16} \varepsilon  \text{, hence } |S_{-1}|\leq 8\cdot 10^{26} \varepsilon.$$

We use inequalities (\ref{qi-bound-24}) for $i=1, 2, 3, 4$ approximating $Q_i(\langle x,y\rangle)$ by $Q_i(\alpha)+(\langle x,y\rangle -\alpha) Q'_i(\alpha)$. We also change the left and right sides of the inequalities to $\pm 10^{45}\varepsilon$ in order to cover deficiencies caused by omitting higher order terms: $(\langle x,y\rangle -\alpha)^2\leq 4\cdot 10^{32}\varepsilon$, all terms with degree at least 3 are much smaller because they are $O(\varepsilon^{3/2})$ and $\varepsilon$ is extremely small. 

$$-10^{45}\varepsilon\leq S_{-1}+S_{-1/2}+S_{-1/4}+S_0+S_{1/4}+S_{1/2}\leq 10^{45}\varepsilon;$$

$$-10^{45}\varepsilon\leq - \frac {48} {23} S_{-1} - \frac {24} {23} S_{-1/2} -\frac {12} {23} S_{-1/4} +\frac {12} {23} S_{1/4} + \frac {24} {23} S_{1/2} \leq 10^{45}\varepsilon;$$

$$-10^{45}\varepsilon\leq \frac {75} {23} S_{-1} + \frac {33} {46} S_{-1/2} + \frac {15} {184} S_{-1/4} -3 S_0 + \frac {15} {184} S_{1/4} + \frac {33} {46} S_{1/2} \leq 10^{45}\varepsilon;$$

$$-10^{45}\varepsilon\leq - \frac {104} {23} S_{-1} -\frac {208} {575} S_{-1/2} - \frac {13} {230} S_{-1/4} + \frac {13} {230} S_{1/4} + \frac {208} {575} S_{1/2} \leq 10^{45}\varepsilon.$$

This system of inequalities must imply that all $S_\alpha$ are $O(\varepsilon)$ because the coefficients for $S_{-1/2}$, $S_{-1/4}$, $S_0$, $S_{1/4}$ form a non-singular matrix. More precisely, from the second and forth inequalities we can immediately get that $|S_{-1/2}|\leq 10^{46} \varepsilon$ and $|S_{1/4}-S_{-1/4}|\leq 2\cdot 10^{46} \varepsilon$. From the first and the third inequality we then get that $|S_{1/4}+S_{-1/4}|\leq 2\cdot 10^{46} \varepsilon$ too so both $|S_{1/4}|$ and $|S_{-1/4}|$ are not greater than $2\cdot 10^{46} \varepsilon$. Using the bound for $|S_{1/4}+S_{-1/4}|$ we also find that $|S_0|\leq 4\cdot 10^{46} \varepsilon$.

Now we will use the bound on $S_0$ to show that, for each pair of points $x,y\in X$ such that $\langle x,y \rangle$ is close to 0, $\langle x,y \rangle$ is $O(\varepsilon)$. The proof is similar to the one for the 8-dimensional kissing configuration.

In the $(24,196560,1/2)$-code for any pair of points $x,y$ such that $\langle x,y\rangle=0$, there exist 44 more points $z$ such that $\langle x,z\rangle = \langle y,z\rangle = 1/2$ (due to the association scheme structure this number is 44 for any such pair of $x$ and $y$). All these points including $x$ and $y$ belong to the $22$-dimensional sphere of radius $\frac 1 {\sqrt{2}}$ with the center at $\frac {x+y} 2$. 46 points on the same $22$-dimensional sphere of radius $\frac 1 {\sqrt{2}}$ with the minimal distance equal to $1$ must be the vertices of the $23$-dimensional cross-polytope. All pairs of points with $\langle x,y\rangle=0$ are then partitioned into 23-tuples from the same $23$-dimensional cross-polytope.

For two pairs of points $x,y$ and $z,t$ from $X$ such that $\langle x,y\rangle$ and $\langle z,t\rangle$ are close to 0 and the other four dot products between these points are close to $1/2$, denote $\langle x,y\rangle$ by $\alpha$ and $\langle z,t\rangle$ by $\beta$. Assume also that $\langle x,z\rangle = 1/2+\delta_1$, $\langle x,t\rangle=1/2+\delta_2$, $\langle y,z\rangle=1/2+\delta_3$, $\langle y,t\rangle = 1/2+\delta_4$. All $\delta_i$, $1\leq i\leq 4$, are not greater than $2\cdot 10^{16}\varepsilon$. The Gram matrix for these four points should be positive semi-definite so its determinant is non-negative:

\begin{equation}\label{0-det}
0\leq
\begin{vmatrix}
   1 & \alpha & \frac 1 2 +\delta_1 & \frac 1 2 + \delta_2 \\
    \alpha & 1 & \frac 1 2 +\delta_3 & \frac 1 2 +\delta_4 \\
    \frac 1 2 +\delta_1 & \frac 1 2 +\delta_3 & 1 & \beta \\
    \frac 1 2 +\delta_2 & \frac 1 2 +\delta_4 & \beta & 1
\end{vmatrix}
\leq
\begin{vmatrix}
   1 & \alpha & \frac 1 2 & \frac 1 2 \\
    \alpha & 1 & \frac 1 2 & \frac 1 2 \\
    \frac 1 2 & \frac 1 2 & 1 & \beta \\
    \frac 1 2 & \frac 1 2 & \beta & 1
\end{vmatrix}
+10^{18} \varepsilon=
\end{equation}

$$=(1-\alpha)(1-\beta)(\alpha\beta+\alpha+\beta) + 10^{18} \varepsilon.$$

Hence we get that $\alpha\beta+\alpha+\beta\geq -10^{18} \varepsilon$. Since $\alpha\beta\leq 4\cdot 10^{32} \varepsilon$, $\alpha+\beta \geq - 6\cdot 10^{32} \varepsilon$. Averaging this inequality over all pairs from a 23-dimensional cross-polytope described above we get the the sum of dot products for pairs of opposite vertices in such a cross-polytope is at least $-23\cdot 3 \cdot 10^{32} \varepsilon$. On the other hand, averaging over all pairs from a cross-polytope, except for a fixed pair $(x_0,y_0)$, the sum of dot products is at least $-22\cdot 3\cdot 10^{32} \varepsilon$. From these inequalities,

$$S_0\geq \langle x_0, y_0 \rangle - (196560^2-196560-1)\cdot 3\cdot 10^{32} \varepsilon \geq \langle x_0, y_0 \rangle - 2\cdot 10^{43}\varepsilon.$$

Combining this with the bound on $S_0$, we find that $| \langle x_0, y_0 \rangle| \leq 5\cdot 10^{46} \varepsilon$.

For the next step we will show that if $\langle x_0, y_0 \rangle$ is close to -1 then $\langle x_0, y_0 \rangle$ differs from -1 by $O(\varepsilon^2)$. In order to do this we use the following lemma.

\begin{lemma}\label{lem:almost_perp}
In a $d$-dimensional spherical code $\{x, z_1, z_2,\ldots, z_{d-1}\}$, $|\langle x, z_i \rangle| < \delta$ for all $i$ from 1 to $d-1$ and for a fixed $\delta>0$. If $z_1, z_2,\ldots, z_{d-1}$ are linearly independent, then, for one of the two unit vectors orthogonal to the span of $z_1, z_2,\ldots, z_{d-1}$, its spherical distance to $x$ is not greater than $\frac \pi 2 \sqrt{ \frac {d-1} {\lambda_1(G)}} \delta$, where $\lambda_1(G)$ is the minimal eigenvalue of the Gram matrix $G$ of $z_1, z_2,\ldots, z_{d-1}$.
\end{lemma}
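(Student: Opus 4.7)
The plan is to split $x$ orthogonally with respect to $V := \mathrm{span}(z_1,\ldots,z_{d-1})$, bound the component lying in $V$ using the Gram matrix $G$, and then convert the norm bound into a spherical distance bound. Since the $z_i$ are linearly independent, $V$ is $(d-1)$-dimensional and $V^\perp$ is a line, giving exactly two unit vectors $\pm w$ in $V^\perp$.

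First, I would write $x = x_\parallel + x_\perp$ with $x_\parallel = \sum_{i=1}^{d-1} c_i z_i \in V$ and $x_\perp \in V^\perp$. Taking inner products against each $z_j$ shows that the coefficient vector $c = (c_1,\ldots,c_{d-1})^T$ solves the linear system $Gc = b$, where $b = (\langle x,z_1\rangle,\ldots,\langle x,z_{d-1}\rangle)^T$. Since $G$ is symmetric positive definite, $c = G^{-1}b$ and
$$\|x_\parallel\|^2 = c^T G c = b^T G^{-1} b \leq \frac{\|b\|^2}{\lambda_1(G)} \leq \frac{(d-1)\delta^2}{\lambda_1(G)},$$
the first inequality being the standard spectral bound for the positive definite matrix $G^{-1}$, whose largest eigenvalue is $1/\lambda_1(G)$, and the second using $|\langle x,z_i\rangle|<\delta$.

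Second, assuming $x_\perp \neq 0$, set $w := x_\perp/\|x_\perp\|$, which is one of the two unit vectors orthogonal to $V$. Because $\|x\| = 1$, the angle $\theta = \angle(x,w)$ satisfies $\sin\theta = \|x_\parallel\|$ and $\cos\theta = \|x_\perp\|\geq 0$, so $\theta = \arcsin\|x_\parallel\|$. The elementary inequality $\arcsin t \leq \frac{\pi}{2} t$ on $[0,1]$, which follows from the convexity of $\arcsin$ on $[0,1]$ and its values at the endpoints, converts the bound on $\|x_\parallel\|$ into the claimed spherical distance bound. The degenerate case $x_\perp = 0$ forces $\|x_\parallel\| = 1$, in which case both $\pm w$ lie at spherical distance $\pi/2$ from $x$, still consistent with the stated inequality. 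There is no real obstacle in this argument; the only substantive step is the variational identity $b^T G^{-1} b \leq \|b\|^2/\lambda_1(G)$, which is what produces the factor $1/\sqrt{\lambda_1(G)}$ in the final bound.
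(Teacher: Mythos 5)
Your proof is correct and follows essentially the same approach as the paper: the same orthogonal decomposition of $x$ relative to $V=\mathrm{span}(z_1,\ldots,z_{d-1})$, the same use of $G$ to bound $\|x_\parallel\|$ via the spectral estimate (the paper writes it as a Cauchy--Schwarz step followed by $\|G^{-1}\Delta\|\leq\lambda_1(G)^{-1}\|\Delta\|$, you write it directly as $b^{T}G^{-1}b\leq\|b\|^2/\lambda_1(G)$, which is the same estimate), and the same conversion via $\arcsin t\leq\frac{\pi}{2}t$. Your explicit treatment of the degenerate case $x_\perp=0$ is a minor addition the paper leaves implicit.
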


\begin{proof}
We denote one of the unit vectors orthogonal to the span of $z_1, z_2,\ldots, z_{d-1}$ by $z_\perp$ if $\langle x, z_\perp \rangle \geq 0$. We represent $x$ as $\alpha_\perp z_\perp + x_z$, where $x_z$ belongs to the span of $z_1, z_2,\ldots, z_{d-1}$. Then $|\langle x_z, z_i \rangle| < \delta$ for all $i$ from 1 to $d-1$. We denote $\langle x_z, z_i \rangle$ by $\Delta_i$ for all $i$ and the vector $(\Delta_1,\ldots,\Delta_{d-1})$ by $\Delta$. If $\alpha=(\alpha_1,\ldots,\alpha_{d-1})$ is the vector of coordinates of $x_z$ in the basis $z_1, z_2,\ldots, z_{d-1}$, then $\Delta=G\alpha$ and $\alpha=G^{-1} \Delta$. On the other hand,

$$||x_z||^2=\alpha^T G \alpha \leq ||\alpha||\cdot ||\Delta||=||G^{-1} \Delta||\cdot ||\Delta||\leq \frac 1 {\lambda_1(G)} ||\Delta||^2 \leq \frac {d-1} {\lambda_1(G)} \delta^2.$$

Therefore, $\angle(z_\perp,x) = \arcsin{||x_z||} \leq \frac \pi 2 ||x_z||\leq \frac \pi 2 \sqrt{ \frac {d-1} {\lambda_1(G)}} \delta$.
\end{proof}

We consider an arbitrary pair $x_0$, $y_0$ of points from $X$ such that $\langle x_0, y_0 \rangle$ is close to -1. Among 93150 points $x$ from $X$ such that $\langle x_0, x \rangle$ is close to 0, we choose arbitrarily 23 points so that their counterparts in the $(24,196560,1/2)$-code form a basis. Since all dot products in this basis are $\pm 1/2, \pm 1/4, 0$, the Gram matrix of this basis has determinant not smaller than $\frac 1 {4^{23}}$. We can deduce that the Gram matrix of the corresponding basis in $X$ has determinant not smaller than $\frac 1 {4^{24}}$. Each eigenvalue of the Gram matrix is not greater than $13$ (for instance, by the Gershgorin circle theorem). Hence the smallest one is at least $\frac 1 {4^{24} 13^{23}}$. Lemma \ref{lem:almost_perp} implies that the angle between $x_0$ and $y_0$ must be at least

$$\pi - \pi \sqrt{\frac {23} {\lambda_1(G)}}\cdot 5\cdot 10^{46} \varepsilon \geq \pi - \pi \sqrt{23\cdot 4^{24}\cdot 13^{23}}\cdot 5\cdot 10^{46} \varepsilon \geq \pi - 10^{68}\varepsilon.$$

For any points $x,y\in X$ such that $\langle x,y \rangle$ is close to -1/2, we consider a point $z\in X$ such that $\langle x,z \rangle$ is close to -1 and $\langle y,z \rangle$ is close to 1/2. Then, by the inequality on $\angle (x,z)$ proven above, $\langle x,y \rangle$ differs from $-\langle y,z \rangle$ by no more than $10^{68}\varepsilon$. Given that $\langle y,z \rangle$ is within $2\cdot 10^{16}\varepsilon$ of 1/2 we can conclude that  $\langle x,y \rangle$ is within $2\cdot 10^{68}\varepsilon$ of -1/2.

For the next step, consider a pair of points $x'$ and $y'$ of the $(24,196560,1/2)$-code such that $\langle x',y' \rangle=1/4$. There are exactly 275 points $u'$ of the $(24,196560,1/2)$-code such that $\langle x',u' \rangle=\langle y',u' \rangle=1/2$ (this number is the same for all such pairs $x', y'$ due to the association scheme structure). It is not hard to see that, after the appropriate dilation, these 275 points form a $(22, 275, 1/6)$-code. This is a Delsarte-tight code in dimension 22 with exactly two inner products, 1/6 and -1/4. It must possess the structure of a strongly regular graph (see \cite{del77}). Due to this structure, there is a unique $(22, 275, 1/6)$-code \cite{goe75}. In what follows we will analyze the counterpart of this code in the $(24,196560,1/2+\varepsilon)$-code $X$.

Consider two points $x, y\in X$ such that $\langle x,y \rangle$ is close to 1/4. We denote $\langle x,y \rangle$ by $1/4+\delta$, where $|\delta|$ is known to be not greater than $2\cdot 10^{16}\sqrt{\varepsilon}$. With the slight abuse of notation, by $\pm t$, for any real $t$, we will mean an unknown real number between $-t$ and $t$. For instance, for each point $u\in X$ such that both $\langle x,u \rangle$ and $\langle y,u \rangle$ are close to 1/2, we can write $\langle x,u \rangle =1/2\pm 2\cdot 10^{16}\varepsilon$ and $\langle y,u \rangle = 1/2\pm 2\cdot 10^{16}\varepsilon$. Any point $u$ of this kind may be uniquely represented as $\alpha x + \beta y + \gamma z$, where $z$ is a unit vector orthogonal both to $x$ and $y$ and $\alpha, \beta, \gamma$ are real with $\gamma>0$. Straightforward calculations show that both $\alpha$ and $\beta$ must be $2/5 \pm 3\cdot 10^{16}\varepsilon$. Since $||u||=1$, we get

$$\alpha^2+2\alpha\beta \left(\frac 1 4 +\delta\right) +\beta^2 +\gamma^2=1,$$

$$\frac 2 5 + \frac 8 {25} \delta +\gamma^2 \pm 7\cdot 10^{16} \varepsilon = 1,$$

$$\gamma = \sqrt{\frac 3 5} - \frac {4\sqrt{15}} {75}\delta \pm 2\cdot 10^{31} \varepsilon.$$

Now we consider two points $u_1, u_2\in X$ satisfying the conditions above. Hence for their representations $u_1=\alpha_1 x +\beta_1 y + \gamma_1 z_1$ and $u_2=\alpha_2 x +\beta_2 y + \gamma_2 z_2$, $\alpha_1=2/5 \pm 3\cdot 10^{16}\varepsilon$, $\beta_1=2/5 \pm 3\cdot 10^{16}\varepsilon$, $\gamma_1=\sqrt{\frac 3 5} - \frac {4\sqrt{15}} {75}\delta \pm 2\cdot 10^{31} \varepsilon$, $\alpha_2=2/5 \pm 3\cdot 10^{16}\varepsilon$, $\beta_2=2/5 \pm 3\cdot 10^{16}\varepsilon$, $\gamma_2=\sqrt{\frac 3 5} - \frac {4\sqrt{15}} {75}\delta \pm 2\cdot 10^{31} \varepsilon$. Then we can estimate $\langle u_1, u_2 \rangle$:

$$\langle u_1, u_2 \rangle = \alpha_1\alpha_2 + (\alpha_1\beta_2+\beta_1\alpha_2)\left(\frac 1 4 +\delta \right) +\beta_1\beta_2 + \gamma_1\gamma_2 \langle z_1, z_2\rangle =$$

$$= \frac 2 5 + \frac 8 {25} \delta + \left( \frac 3 5 - \frac 8 {25} \delta \right) \langle z_1, z_2\rangle \pm 6\cdot 10^{32}\varepsilon.$$

We know that $\langle u_1, u_2 \rangle$ is not greater than $1/2+\varepsilon$. Therefore,

$$\left( \frac 3 5 - \frac 8 {25} \delta \right) \langle z_1, z_2\rangle \leq \frac 1 {10} - \frac 8 {25} \delta + 7\cdot 10^{32}\varepsilon.$$

This inequality must hold for all pairs from 275 points $z$ from the unit sphere in $\mathbb{R}^{22}$. From the tightness of the $(22, 275, 1/6)$-code, this may happen only if for some of these pairs $\langle z_1, z_2\rangle \geq \frac 1 6$. We conclude that

$$\left( \frac 3 5 - \frac 8 {25} \delta \right) \frac 1 6 \leq \frac 1 {10} - \frac 8 {25} \delta + 7\cdot 10^{32}\varepsilon.$$

Subsequently, $\delta \leq 3\cdot 10^{33}\varepsilon$. Hence we proved that for any two points $x,y\in X$ such that $\langle x,y \rangle$ is close to 1/4, $\langle x,y \rangle\leq 1/4 + 3\cdot 10^{33}\varepsilon$. There are no more than $196560^2-196560$ pairs like this so, using that $|S_{1/4}|\leq 2\cdot 10^{46}\varepsilon$, we also get

$$\langle x,y \rangle\geq 1/4 +S_{1/4} - (196560^2-196560)\cdot 3\cdot 10^{33}\varepsilon\geq$$

$$\geq 1/4 - 2\cdot 10^{46}\varepsilon - (196560^2-196560)\cdot 3\cdot 10^{33}\varepsilon \geq 1/4 - 3\cdot 10^{46} \varepsilon.$$

This means that if $\langle x,y \rangle$ is close to 1/4, $\langle x,y \rangle$ differs from 1/4 by no more than $3\cdot 10^{46} \varepsilon$.

For any points $x,y\in X$ such that $\langle x,y \rangle$ is close to -1/4, we consider a point $z\in X$ such that $\langle x,z \rangle$ is close to -1 and $\langle y,z \rangle$ is close to 1/4. Then, by the inequality on $\angle (x,z)$ we proved, $\langle x,y \rangle$ differs from $-\langle y,z \rangle$ by no more than $10^{68}\varepsilon$. Given that $\langle y,z \rangle$ is within $3\cdot 10^{46} \varepsilon$ of 1/4 we can conclude that  $\langle x,y \rangle$ is within $2\cdot 10^{68}\varepsilon$ of -1/4.

Combining all the results from this section, we have shown that any $(24, 196560, 1/2+\varepsilon)$-code $X$ and the unique $(24, 196560, 1/2)$-code are $2\cdot 10^{68}\varepsilon$-close. Together with Theorem \ref{thm:code-stability} this gives the proof of Theorem \ref{thm:24}. The proof is very similar to the proofs of Theorem \ref{thm:general} and Theorem \ref{thm:8}. The constant $C_{24}$ should satisfy

$$C_{24}\geq \frac {\pi} 2 \sqrt{24}\cdot \frac{85\cdot 196560^5\max\{\sqrt{\|B\|},1\}}{\Delta}\cdot 2\cdot 10^{68},$$

where $B$ is the Gram matrix of the $(24, 196560, 1/2)$-code and $\Delta$ is the minimum of the minimal gap between two consecutive eigenvalues of $B$ and the minimal positive eigenvalue of $B$. $\|B\|$ is not greater than 196560 and $\Delta$ may be estimated by using the root separation bounds from \cite{rum79}. Overall, it is sufficient to take $C_{24}=10^{3120}$.

\end{document}